\def\F{{\mathbb F}}
\def\F{{\mathbb F}}
\def\ll{{\bf \lambda}}
\def\00{{\bf 0}}
\def\+{\oplus}
\def\\{\cr}
\def\({\left(}
\def\){\right)}
\providecommand{\newoperator}[3]{%
  \newcommand*{#1}{\mathop{#2}#3}}
\newoperator{\FD}{\mathrm{FD}}{\nolimits}
\newcommand{\GF}[2][2]{{\mathbb F}_{#1^{#2}}}
\numberwithin{thm}{section}
\newcommand{\tr}[2][1]{Tr_{#1}^{#2}}
\def\urltilda{\kern -.15em\lower .7ex\hbox{\~{}}\kern .04em}
\begin{document}
\title{Further study on the maximum number of bent components of
vectorial functions}

\author{ Sihem Mesnager$^{1}$, Fengrong Zhang$^2$, Chunming Tang$^3$, Yong Zhou$^2$}
\institute{
1. LAGA, Department of Mathematics, University of
Paris VIII \\(and Paris XIII and CNRS),
 Saint--Denis cedex 02, France.\\
E-mail: \email{smesnager@univ-paris8.fr}\\
2. School of Computer Science and Technology,
 China University\\ of Mining and Technology,
 Xuzhou, Jiangsu 221116, China.\\
E-mail: \email{\{zhfl203,yzhou\}@cumt.edu.cn}\\
3. School of Mathematics and Information, China West Normal University, Nanchong, Sichuan 637002, China.\\
E-mail: \email{tangchunmingmath@163.com}
}



\date{\today}
\maketitle

\begin{abstract}
In 2018, Pott, at al. have studied in [IEEE Transactions on Information Theory. Volume: 64, Issue: 1, 2018] the  maximum number of bent components of vectorial function. They have presented serval
nice results  and suggested several open problems in this context. This paper is in the continuation of their study in which we solve two open problems raised by Pott et al. and partially solve an open problem raised by the same authors.
Firstly, we prove that  for a vectorial  function, the property of having the maximum number of bent components is invariant under the so-called CCZ equivalence. Secondly, we prove the non-existence of APN plateaued having the maximum number of bent components. In particular, quadratic
APN functions cannot have the maximum number of bent components.
Finally, we  present some sufficient conditions that the vectorial function defined from $\mathbb{F}_{2^{2k}}$ to $\mathbb{F}_{2^{2k}}$  by its univariate representation:
  $$ \alpha x^{2^i}\left(x+x^{2^k}+\sum\limits_{j=1}^{\rho}\gamma^{(j)}x^{2^{t_j}}
  +\sum\limits_{j=1}^{\rho}\gamma^{(j)}x^{2^{t_j+k}}\right)$$ has the maximum number of {  components bent functions, where $\rho\leq k$}. Further, we show that the differential spectrum of the
   function $   x^{2^i}(x+x^{2^k}+x^{2^{t_1}}+x^{2^{t_1+k}}+x^{2^{t_2}}+x^{2^{t_2+k}})$ (where $i,t_1,t_2$ satisfy some conditions) is different from the binomial function $F^i(x)= x^{2^i}(x+x^{2^k})$ presented in the article of Pott et al.

  Finally, we provide  sufficient and necessary conditions so that
  the functions $$Tr_1^{2k}\left(\alpha x^{2^i}\left(Tr^{2k}_{e}(x)+\sum\limits_{j=1}^{\rho}\gamma^{(j)}(Tr^{2k}_{e}(x))^{2^j}
  \right)\right) $$ are bent.

\end{abstract}
{\bf Keywords:} Vectorial functions, Boolean functions, Bent functions, Nonlinearity, APN functions, Plateaued functions, CCZ equivalence. \bigskip
\section{Introduction}

Vectorial (multi-output) Boolean functions, that is, functions from the vector space $\mathbb{F}^{n}_{2}$ (of all binary
vectors of length $n$) to the vector space $\mathbb{F}^{m}_{2}$, for given positive integers
$n$ and $m$. These functions are called $(n,m)$-functions and
include the (single-output) Boolean functions (which correspond to the case $m
=1$). In symmetric cryptography,  multi-output functions are called \emph{S-boxes}.
They are fundamental parts of block ciphers.
Being the only source of nonlinearity in these ciphers, S-boxes play a
central role in their robustness, by providing confusion (a requirement already mentioned by C. Shannon), which is necessary to withstand known (and hopefully future) attacks.
 When they are used as S-boxes in block ciphers, their number $m$ of output bits equals or approximately equals the number $n$ of input bits. They can also be used in stream ciphers, with $m$ significantly smaller than $n$, in the place of Boolean functions to speed up the ciphers.

We shall identify $\mathbb{F}_{2}^n$ with the Galois field $\mathbb{F}_{2^n}$ of order $2^n$ but we shall always use $\mathbb{F}_{2}^n$ when the field structure will not really be used.
The {\em component functions} of $F$ are the Boolean functions $v\cdot F$, that is, $x\in\mathbb{F}_{2^n}\mapsto  \tr{m}(v F(x))$, where ``$\cdot$" stands for an inner product in $\mathbb{F}_{2^m}$ (for instance:
 $u\cdot v:=\tr m(uv), \forall u\in \mathbb{F}_{2^m}, v\in \mathbb{F}_{2^m}$ where "$\tr m$" denotes the absolute trace over $ \mathbb{F}_{2^m}$).  In order to classify vectorial Boolean functions that satisfy desirable
nonlinearity conditions, or to determine whether, once found, they
are essentially new (that is, inequivalent in some sense to any of
the functions already found) we use some concepts of equivalence.
For vectorial Boolean functions, there exist essentially two kinds concepts of equivalence:
 the extended affine EA-equivalence and the
CCZ-equivalence (Carlet-Charpin-Zinoviev equivalence). Two $(n,r)$-functions $F$ and
$F^{\prime}$ are said to be EA-equivalent if there exist affine
automorphisms $L$ from $\mathbb{F}_{2^n}$ to $\mathbb{F}_{2^n}$ and $L^{\prime}$ from $\mathbb{F}_{2^r}$
to $\mathbb{F}_{2^r}$ and an affine function $L''$ from $\mathbb{F}_{2^n}$ to $\mathbb{F}_{2^r}$ such
that $F'=L^{\prime} \circ F \circ L + L''$. EA-equivalence is a
particular case of CCZ-equivalence \cite{CCZ98}. Two
$(n,r)$-functions $F$ and $F^{\prime}$ are said to be CCZ-equivalent if
their
graphs $G_F:=\{(x,F(x)),~ x\in \mathbb{F}_{2^n}$\} and
$G_F^{\prime}:=\{(x,F^{\prime}(x)),~ x\in \mathbb{F}_{2^n}$\}
are affine equivalent, that is, if there exists an affine
permutation $\mathcal{L}$ of $\mathbb{F}_{2^n} \times \mathbb{F}_{2^m}$ such that
$\mathcal{L}(G_F)=G_F^{\prime}$.

A standard notion of \emph{nonlinearity}
of an $(n,m)$-function $F$ is defined as
\begin{equation}\label{N_1}
\mathcal{N}(F)=\underset{v\in
\F_{2^m}^\star}{\min}nl(v\cdot F),
\end{equation}
where $v\cdot F$
denotes the usual inner product on $\F_{2^m}$ and $nl(\cdot)$ denotes the nonlinearity of Boolean functions (see definition in Section \ref{Preliminaries}).
From the covering radius bound, it is known that $\mathcal{N}(F)\leqslant 2^{n-1}-2^{n/2-1}$. The functions achieving this bound are called $(n,m)$-\emph{bent} functions. Equivalently, a vectorial Boolean function $F: \mathbb{F}_{2^n}\rightarrow \mathbb{F}_{2^m}$ is said to be a vectorial bent function if  all nonzero component functions of $F$ are \emph{bent} (Boolean) functions. Bent Boolean functions have  maximum  Hamming distance to the set of affine Boolean functions. The notion of bent function was introduced by Rothaus \cite{RO76} and attracted a lot of research of more than four decades. Such functions
are extremal combinatorial objects with several areas of application, such as coding theory, maximum length sequences, cryptography. A survey on bent function can be found in \cite{CarletMesnagerDCC2016} as well as the book \cite{MesnagerBook}.

In~\cite{CC-Nyberg}, it is shown that $(n,m)$-bent functions exist only if $n$ is even and $m\leqslant n/2$.
The notion of nonlinearity in (\ref{N_1}) (denoted by $\mathcal{N}$), was first introduced by Nyberg in~\cite{CC-Nyberg}, which is closely related to Matsui's linear attack~\cite{Matsui} on block ciphers. It has been  further studied
by Chabaud and Vaudenay \cite{ChabaudVaudenay95}. The nonlinearity is
invariant under CCZ equivalence (and hence under extended affine equivalence).
 Budaghyan and  Carlet have proved in \cite{BC09} that for
bent vectorial Boolean functions, CCZ-equivalence coincides with
EA-equivalence.


The problem of construction vectorial bent functions has been considered in the literature. Nyberg \cite{CC-Nyberg} investigated the constructions of vectorial bent functions; she presented two constructions based on Maiorana-McFarland bent functions and $\mathcal{PS}$ bent functions, respectively. In \cite{SatohIwataKurosawa99},  Satoh, Iwata, and  Kurosawa have improved the first method of construction given in \cite{CC-Nyberg} so that the resulting functions achieve the largest degree. Further, serval constructions of bent vectorial functions have been investigated in some papers \cite{CarletMesnagerBentVectorial,Pasaliczh2012,Feng2011,MesnagerDCC2015,Wu2005}. A complete state of the art can be found in \cite{MesnagerBook} (Chapter 12).

 Very recently, Pott \emph{et al}.\cite{Pott2017} considered functions $ \mathbb{F}_{2^n}\rightarrow \mathbb{F}_{2^n}$ of the form
$F^i(x) = x^{2^i}(x + x^{2^{k}})$, where $n=2k, i=0,1,\cdots,n-1$. They showed that the upper bound of number
of bent component functions of a vectorial function  $F: \mathbb{F}_{2^n}\rightarrow \mathbb{F}_{2^n}$ is $2^n-2^{n/2}$ ($n$ even). In addition, they showed that the
binomials $F^i(x)= x^{2^i}(x + x^{2^k})$  have such a large number
of bent components, and these binomials are inequivalent to the
monomials $x^{2^{k}+1}$ if $0 < i < k$. Further, the properties  (such as differential properties and complete Walsh spectrum)  of the functions $F^i$ were investigated.

  In this paper, we will consider three open problems raised by Pott et al \cite{Pott2017}. In the first part,  we prove that CCZ equivalence is preserved for vectorial  functions having the maximum number of bent components. Next, we consider APN plateaued functions and investigate if they can have the maximum number of bent components. We shall give a negative answer to this question. Finally, we consider the bentness property of
  functions  $\mathbb{F}_{2^{2k}}\rightarrow \mathbb{F}_{2^{2k}}$ of the form
  \begin{equation}\label{equa main}
  G(x)=\alpha x^{2^i}\left(x+x^{2^k}+\sum\limits_{j=1}^{\rho}\gamma^{(j)}x^{2^{t_j}}
  +\sum\limits_{j=1}^{\rho}\gamma^{(j)}x^{2^{t_j+k}}\right),
  \end{equation}
where
 $m\leq k$, $\gamma^{(j)}\in {\Bbb F}_{2^{k}}$ and $0\leq t_j\leq k$ be a nonnegative
integer.
In particular,  { we show the functions   $ x^{2^{t_2}}\left(x+x^{2^k}+x^{2^{t_1}}+x^{2^{t_1+k}}+x^{2^{t_2}}+x^{2^{t_2+k}}\right)$ are  inequivalent to $x^{2^{t_2}}(x + x^{2^k}) $,  where $t_1=1$ and $\gcd(t_2,k) \neq 1$ }.
Here we use the concept
of CCZ-equivalence when we speak about the equivalence
of functions.
The rest of the paper is organized as follows. Some preliminaries are given in Section \ref{Preliminaries}. In Section \ref{stability}, we prove
our result on the stability under CCZ equivalence of a function having the maximum number of bent components which solve Problem 4 in
\cite{Pott2017}. Next, in Section \ref{APNplateaued}, we prove that APN plateaued functions cannot have the maximum number of bent components, which partially solves Problem 8 in \cite{Pott2017}. Finally, in Section \ref{newbent} we investigate Problem 2 in \cite{Pott2017}. To this end,  we provide several functions defined as $G(x) = x  L(x)$ on ${\Bbb F}_{2^{2k}}$ (where $L(x)$ is a linear function on  ${\Bbb F}_{2^{2k}}$)
 such that the number of bent components $Tr^{2k}_1(\alpha F(x))$ is maximal.

\section{Preliminaries and notation}\label{Preliminaries}
 Throughout this article, $\| E\|$ denotes the cardinality of a finite set $E$, the binary field is denoted by $\mathbb{F}_2$ and the finite field of order $2^n$ is denoted by ${\Bbb F}_{2^n}$. The multiplicative group ${\Bbb F}^*_{2^n}$ is a cyclic group consisting
of $2^n-1$ elements.  The set of all Boolean functions mapping from ${\Bbb F}_{2^n}$ (or $ {\Bbb F}_2^n$) to ${\Bbb F}_{2}$ is denoted by $B_n$.

 Recall that for any positive integers $k$, and $r$
dividing $k$, the trace function from $\GF{k}$ to $\GF{r}$, denoted
by $Tr_{r}^{k}$, is the mapping defined as:
\begin{displaymath}
  Tr_{r}^{k}(x):=\sum_{i=0}^{\frac kr-1}
  x^{2^{ir}}=x+x^{2^r}+x^{2^{2r}}+\cdots+x^{2^{k-r}}.
\end{displaymath}
   In particular, the {\em absolute trace} over $\mathbb{F}_2$ of an element $x \in
\mathbb{F}_{2^n}$ equals $Tr_1^{n}(x)=\sum_{i=0}^{n-1} x^{2^i}$.

   There exist several kinds of possible {\em univariate representations}  (also called trace, or polynomial, representations) of Boolean functions  which are not all unique and use the identification between the vector-space ${\Bbb F}_2^n$ and the field $\GF{n}$.  Any Boolean function over ${\Bbb F}_2^n$ can be represented in a unique way as a polynomial in one variable $x\in \mathbb{F}_{2^n}$ of the form $f(x) =\sum_{j=0}^{2^n-1} a_j x^j$, where $a_0,a_{2^n-1}\in{\Bbb F}_2$, $a_j$'s are elements of $\GF{n}$ for $1\leq j <2^n-1$
 such that ${a_j}^2=a_{2i\mod (2^n-1)}$. The binary expansion of $j$ is $j=j_0+j_12+\cdots j_{n-1}2^{n-1}$ and we denote $\bar{j}=(j_0,j_1, \cdots,j_{n-1})$. The algebraic degree of $f$ equals $max \{wt(\bar i)\mid a_j\not=0, 0\leq j < 2^n\}$ where $wt(\bar i)=j_0+j_1+\cdots+j_{n-1}$. Affine functions (whose set is denoted by $A_n$) are those of algebraic degree at most
$1$. The Walsh transform of
$f\in{ B}_{n}$ at $ \ll \in \mathbb{F}_{2^n}$ is defined as
$$ \begin{array}{l}W_f(\ll) = \sum\limits_{x \in \mathbb{F}_{2^n}}(-1)^{f(x)
+ Tr^n_1(\ll x)}.\end{array}$$

  The \emph{nonlinearity} of $f\in B_{n}$ is defined as the minimum Hamming distance
  to the set of all $n$-variable affine functions, i.e.,
\begin{displaymath}
\begin{array}{c}
     nl(f)=\min_{g\in A_{n}}d(f,g).
\end{array}
  \end{displaymath}
  where $d(f,g)$ is the Hamming distance between $f$ and $g$.
Following is the relationship
between nonlinearity and Walsh spectrum of $f \in {B}_n$ $$
 nl(f) = 2^{n-1} - \frac{1}{2}\max_{\ll \in
 \mathbb{F}_{2^n}}|W_f(\ll)|.$$
 By Parseval's identity $\sum_{\ll \in
 \mathbb{F}_{2^n}} W_f(\ll)^2 = 2^{2n}$, it can be shown that
 \noindent $ \max\{ |W_f(\ll) | : \ll \in \mathbb{F}_{2^n} \} \ge
 2^{\frac{n}{2}}$ which implies that $nl(f) \le 2^{n-1} - 2^{\frac{n}{2} -
 1}$.
 If $n$ is an even integer a function $f \in \mathcal{B}_n$ is
said to be \emph{ bent }  if  $W_f(\ll) \in \{2^{\frac{n}{2}},
-2^{\frac{n}{2}}\}$, for all $\ll \in \mathbb{F}_{2^n}$. Moreover, a function $f \in \mathcal{B}_n$ is said to be $t$-\emph{plateaued} if $W_f(\ll) \in\{ 0, \pm 2^\frac{n+t}{2}\}$, for all $\ll \in \mathbb{F}_{2^n}$. The integer $t$ $(0\leq t\leq n$) is called the \emph {amplitude} of $f$. Note that  a bent function is a 0-plateaued function.
In the following, $``<,>"$ denotes the standard inner (dot) product of two vectors, that is, $<\ll , x>=\lambda_1x_1+ \ldots + \lambda_nx_n$, where $\ll,x\in F_2^n$.  If we  identify the vector space $F_2^n$ with
the finite field $F_{2^n}$, we use the trace bilinear form
$Tr^n_1(\ll x)$  instead of the dot product,  that is, $<\ll , x>=Tr^n_1(\ll x)$, where $\ll,x\in F_{2^n}$.

For vectorial functions $F:  \mathbb{F}_2^n\rightarrow \mathbb{F}_2^m$, the extended Walsh-Hadamard transform defined as,
$$ \begin{array}{l}
 W_F(u,v) = \sum\limits_{x \in \mathbb{F}_2^n}(-1)^{<v,F(x)>
+ <u, x>},\end{array}$$
where  $F(x)=(f_1(x),f_2(x),\cdots,f_m(x)),u\in \mathbb{F}_2^n, v\in \mathbb{F}_2^m$.

Let $F$ be a vectorial function from $\mathbb{F}_{2^n}$ into $\mathbb{F}_{2^m}$. The linear combinaison of the coordinates of $F$ are the Boolean functions $f_\lambda: x\mapsto Tr_1^{m} (\lambda F(x))$, $\lambda \in \mathbb{F}_{2^m}$, where $f_{0}$ is the null function. The functions  $f_\lambda$ ($\lambda\not=0$) are called the \emph{components} of $F$. A vectorial function is said to be \emph{bent} (resp.  $t$-\emph{plateaued}) if all its components are bent (resp. $t$-plateaued). A vectorial function $F$  is called \textit{vectorial plateaued} if all its components are plateaued with possibly different amplitudes.


  Let $F:  \mathbb{F}_2^n\rightarrow \mathbb{F}_2^n $ be an $(n,n)$-function. For any $a\in {\Bbb F}_2^n, b\in {\Bbb F}_2^n$,
  we denote
  \begin{displaymath}
  \begin{array}{c}
  \Delta_F(a,b)=\{x|x\in {\Bbb F}_2^n,F(x\oplus a)\oplus F(x)=b\},\\
  \delta_F(a,b)=\|\Delta_F(a,b)\|,
  \end{array}
  \end{displaymath}
  Then, we have $\delta(F):=max_{a\not=0, b\in {\Bbb F}_2^n }  \delta_F(a,b)\geq 2$ and the functions for which equality holds are said to be \emph {almost perfect nonlinear}(APN).


 A  nice survey on Boolean and vectorial Boolean functions for cryptography can be found in \cite{Cbook1} and \cite{Cbook}, respectively.

\section{The stability of a function having the maximum number of bent components under CCZ equivalence}\label{stability}
In \cite{Pott2017}, Pott et al. have shown that the maximum number of bent components of a vectorial $(n,n)$-function $F$ is $2^n -2^k$ where $k:=\frac{n}2$ ($n$ even). They left open the problem whether the property of a function having the maximum number of bent components is invariant under CCZ equivalence or not. In this section we solve this problem by giving a positive answer in the following theorem.

 \begin{theorem}\label{theo:CCZ}
 Let $n=2k$ and $F, F'': \mathbb F_2^n \rightarrow \mathbb F_2^n$  be CCZ-equivalent functions. Then $F$ has $2^n -2^k$ bent components if and only if
$F''$ has $2^n -2^k$ bent components.
  \end{theorem}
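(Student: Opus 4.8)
The plan is to transport the whole Walsh spectrum of $F$ through the CCZ map and thereby turn the statement into a counting problem for a single CCZ-invariant subset of the plane $\mathbb{F}_2^n\times\mathbb{F}_2^n$. First I would record how the extended Walsh transform behaves under CCZ-equivalence. Since $F$ and $F''$ are CCZ-equivalent there is an affine permutation $\mathcal{L}$ of $\mathbb{F}_2^n\times\mathbb{F}_2^n$ with $\mathcal{L}(G_F)=G_{F''}$; write $\mathcal{L}(z)=Mz+c$ with $M$ a linear bijection. Summing the defining character sum of $W_{F''}(a,b)$ over $G_{F''}=\mathcal{L}(G_F)$ and absorbing the translation $c$ into a global sign yields, for all $(a,b)$,
$$ W_{F''}(a,b)=\pm\,W_F\big(\phi(a,b)\big), $$
where $\phi$ is the linear bijection of $\mathbb{F}_2^n\times\mathbb{F}_2^n$ adjoint to $M$ for the trace form. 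In particular the level set $L:=\{(u,v):W_F(u,v)^2=2^n\}$ satisfies $L(F'')=\phi^{-1}(L)$, so both $L$ and $\|L\|$ are CCZ-invariant.

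Next I would reformulate ``bent component'' geometrically. A component $x\mapsto Tr_1^n(bF''(x))$ with $b\neq0$ is bent iff $W_{F''}(a,b)^2=2^n$ for every $a$, i.e.\ iff the whole coset $\phi(0,b)+\phi(U)$, with $U:=\mathbb{F}_2^n\times\{0\}$, lies inside $L$. Hence the number $N(F'')$ of bent components of $F''$ equals the number of cosets of the $n$-dimensional subspace $D:=\phi(U)$ contained in $L$, whereas the hypothesis $N(F)=2^n-2^k$ is the number of cosets of $U$ (the columns, indexed by the bent directions $v$) contained in $L$. So the theorem becomes: the number of full cosets of $L$ is the same for the direction $U$ and for the transported direction $D$.

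To compare the two counts I would use that both directions admit a Parseval-per-coset identity: $\sum_{w\in c+U}W_F(w)^2=2^{2n}$ is ordinary Parseval for a column, while $\sum_{w\in c+D}W_F(w)^2=\sum_{w'\in c'+U}W_{F''}(w')^2=2^{2n}$ is Parseval for $F''$. Thus every coset of size $2^n$ in either direction carries sum-of-squares exactly $2^{2n}$ with even integer entries. Since Pott et al.'s bound applied to $F''$ already gives $N(F'')\le 2^n-2^k$, only $N(F'')\ge 2^n-2^k$ remains. My line of attack is to first prove that for a function with the maximal number of bent components the level set is exactly the union $\mathbb{F}_2^n\times B$ of its $2^n-2^k$ bent columns, i.e.\ that no value $\pm2^{k}$ occurs in a non-bent column. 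For this I would combine the Parseval identity with an integrality rigidity (a coset of $2^n$ even integers with sum-of-squares $2^{2n}$ in which all but at most three entries equal $\pm2^{k}$ must, by a sum-of-two/three-squares argument, be entirely $\pm2^{k}$) together with the maximality of $N(F)$. Knowing $L=\mathbb{F}_2^n\times B$, counting the $D$-cosets inside $L$ reduces to linear algebra about the projection of $D$ onto the output axis.

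The hard part will be this final count, because $\phi$ mixes input and output coordinates, so a $D$-coset is generally not a column: the projection of $D$ to the output axis has some rank $r$, and a crude count produces a deficit governed by how this projection meets the non-bent set $\mathbb{F}_2^n\setminus B$, which a priori can make $N(F'')$ strictly smaller than $2^n-2^k$. Closing the deficit forces me to invoke the admissibility of $\mathcal{L}$ --- that the first-coordinate map $x\mapsto \pi_1(x)$ (the output part of $\mathcal{L}(x,F(x))$) is a permutation --- in order to show that the output-projection of $D$ is compatible with $B$. I expect this compatibility, rather than the spectral bookkeeping, to be the real obstacle, and I would establish it by analysing how $\mathcal{L}$ acts on the fibres of $F$ over $B$ and over its complement.
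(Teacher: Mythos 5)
The framework you set up is sound as far as it goes: the transport formula $W_{F''}(a,b)=\pm W_F(\phi(a,b))$ for a linear bijection $\phi$ of $\mathbb F_2^n\times\mathbb F_2^n$ is the standard behaviour of the extended Walsh transform under CCZ-equivalence, the reformulation of ``$b$ is a bent component'' as ``the coset $\phi(0,b)+\phi(\mathbb F_2^n\times\{0\})$ lies in the level set $L$'' is correct, and invoking Pott et al.'s upper bound to reduce everything to the inequality $N(F'')\ge 2^n-2^k$ is the right move. But the proposal stops exactly where the proof has to begin, and the two steps you leave open are genuinely problematic. First, your intermediate claim that $L=\mathbb F_2^n\times B$ (no Walsh value $\pm 2^{k}$ in a non-bent column) is not established: the ``integrality rigidity'' lemma you invoke has as hypothesis that all but at most three entries of the column equal $\pm 2^{k}$, and a non-bent column satisfies nothing of the sort (a $t$-plateaued non-bent component, for instance, has most of its Walsh values equal to $0$), so the lemma simply does not apply to the object you want to control. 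Second, and more seriously, the final count of $D$-cosets inside $L$ --- which you yourself identify as ``the real obstacle'' --- is the entire content of the theorem, and you offer only the intention to ``analyse how $\mathcal L$ acts on the fibres of $F$.'' A priori the projection of $D=\phi(\mathbb F_2^n\times\{0\})$ onto the output axis can meet the non-bent set badly, and nothing in your text rules this out.

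For comparison, the paper closes precisely this gap by a structural argument rather than a spectral count. It first replaces $F$ by the EA-equivalent $F'(x)=(H(x),I(x))$, separating the non-bent directions ($H$) from a transversal of bent directions ($I$); it then writes the affine permutation as $\mathcal L=(L_1,L_2,L_3)$ with $L_i(x,y,z)=L_i'(x,y)+L_i''(z)$ and uses the admissibility condition --- that $L_1(x,H(x),I(x))$ must be a permutation, hence every component of it is balanced and therefore not bent --- to force $L_1''=0$. With $L_1''$ killed, a direct Walsh computation shows that $\langle u,F''(x)\rangle$ is bent whenever $L_2''^*(u')+L_3''^*(u'')\neq 0$, and a rank argument on the matrix $\left[\begin{smallmatrix}A_2^T\\ A_3^T\end{smallmatrix}\right]$ (coming from the invertibility of $\mathcal L$) shows the exceptional set is a $k$-dimensional subspace, giving $N(F'')\ge 2^n-2^k$ without ever needing to know the full level set $L$ or to prove your claim about non-bent columns. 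You would need to supply an argument of comparable substance at both of your open points before the proposal becomes a proof.
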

 \begin{proof}
 Let $F$ be a function with $2^n -2^k$ bent components.
 Define
 \begin{displaymath}
  S=\{v\in \mathbb F_2^n: x \rightarrow <v, F(x)> \text{ is not bent} \}.
 \end{displaymath}
 By Theorem 3 of \cite{Pott2017}, $S$  is a linear
subspace of dimension $k$. Then, let $U$ be any $k$-dimensional subspace of $\mathbb F_2^n$ such that $U \cap S =\{0\}$.
 Let $v_1, \cdots, v_k$ be a basis of $S$ and $u_1, \cdots, u_k$ be a basis of $U$.
 Define a new function $F': \mathbb F_2^n \rightarrow \mathbb F_2^n$ as
 \begin{displaymath}
  F'(x)=(H(x), I(x))
 \end{displaymath}
 where $H(x)=(<v_1, F(x)>, \cdots, <v_k, F(x)>)$
  and $I(x)=(<u_1, F(x)>, \cdots, <u_k, F(x)>)$.
  Then, $F'$ is EA-equivalent to $F$. Recall that the property of a function having the maximum number of bent components is invariant under EA equivalence.
  Thus, $F'$ has $2^n -2^k$ bent components. Since $F$ and $F''$  are CCZ-equivalent functions, $F''$ is CCZ-equivalent to $F'$, which has $2^n -2^k$ bent components.
  Let $\mathcal L(x, y,z) = (L_1(x, y,z), L_2(x, y,z), L_3(x, y,z))$, (with $L_1 :\mathbb F^n_2 \times F^k_2 \times F^k_2 \rightarrow \mathbb F^n_2$,
  $L_2 :\mathbb F^n_2 \times F^k_2 \times F^k_2 \rightarrow \mathbb F^k_2$ and  $L_3 :\mathbb F^n_2 \times F^k_2 \times F^k_2 \rightarrow \mathbb F^k_2$)
  be an affine permutation of $\mathbb F^n_2 \times F^k_2 \times F^k_2$ which maps the graph of $F'$ to the graph of $F''$.
  Then, the graph $\mathcal G_{F''}=\{\mathcal L (x, H(x), I(x)): x \in \mathbb F_2^n \}$. Thus $L_1(x,H(x), I(x))$ is a permutation
  and for some affine function $L_1': \mathbb F^n_2 \times F^k_2 \rightarrow \mathbb F^n_2$ and linear function $L_1'':  \mathbb F^k_2 \rightarrow \mathbb F^n_2$ we can write
  $L_1(x, y,z)=L_1'(x, y)+L_1''(z)$.
  For any element $v$ of $\mathbb F^n_2$ we have
  \begin{align}\label{eq:v-comp}
 <v , L_1(x, H (x), I(x))> =& <v , L_1'(x,H(x))> + <v , L_1''(I (x))> \nonumber\\
  =& <v , L_1'(x,H(x))> + <L_1''^*(v) , I (x)> \nonumber\\
  =& <L_1''^*(v) , I (x)>+<v',H(x)>+<v'',x>+a,
  \end{align}
where $a\in \mathbb F_2$, $v'\in \mathbb F_2^k$, $v''\in \mathbb F_2^n$ and $L_1''^*$  is the adjoint operator of $L_1''$, in fact, $L_1''^*$ is the
linear permutation whose matrix is transposed of that of $L_1''$.
 Since $L_1(x,H(x), I(x))$ is a permutation, then any function $<v , L_1(x, H (x), I(x))>$ is balanced (recall that this property is a necessary and sufficient condition) and,
  hence, cannot be bent.
  From the construction of $F'$, $<L_1''^*(v) , I (x)>+<v',H(x)>+<v'',x>+a$ is not bent if and only if $L_1''^*(v)=0$. Therefore,
  $L_1''^*(v)=0$ for any $v\in \mathbb F_2^n$. This means that $L_1''$ is null, that is, $L_1(x, H (x), I(x))=L_1'(x, H (x)) $.
 We can also write $L_i(x, y,z)=L_i'(x, y)+L_i''(z)$ for $i\in \{2, 3\}$ where $L_i': \mathbb F^n_2 \times F^k_2 \rightarrow \mathbb F^k_2$ are affine functions and
 $L_i'':  F^k_2 \rightarrow \mathbb F^k_2$ are linear functions. Set  $F_1''(x)=L_1(x, H (x), I(x))=L_1'(x, H (x))$ and
 $F_2''(x)=(L_2'(x, H(x))+L_2''(I(x)),L_3'(x, H(x))+L_3''(I(x)))$.
 Then, $F''(x)=F_2''\circ F_1''^{-1}(x)$. For any $v\in \mathbb F_2^n$ and $u=(u',u'')\in \mathbb F_2^k \times \mathbb F_2^k $,
 \begin{align*}
 W_{F''}(v,u)=& \sum_{x \in \mathbb F_2^n} (-1)^{<u, F''(x)>+<v, x>}\\
 =& \sum_{x \in \mathbb F_2^n} (-1)^{<u, F''\circ F_1''(x)>+<v, F_1''(x)>}\\
 =&\sum_{x \in \mathbb F_2^n} (-1)^{<u, F''_2(x)>+<v, F_1''(x)>}\\
 =& \sum_{x \in \mathbb F_2^n} (-1)^{<u, (L_2''(I(x)),L_3''(I(x)))>+ <u, (L_2'(x, H(x)),L_3'(x, H(x)))> +<v, L_3'(x, H(x))>}\\
 =& \sum_{x \in \mathbb F_2^n} (-1)^{< L_2''^*(u')+L_3''^*(u''), I(x)>+ <v',H(x)>+<v'',x>+a}.
 \end{align*}
By the construction of $I(x)$, if $L_2''^*(u')+L_3''^*(u'')\neq 0$, $< L_2''^*(u')+L_3''^*(u''), I(x)>+ <v',H(x)>$ is bent. Thus,
$<u, F''(x)>$ is bent when $L_2''^*(u')+L_3''^*(u'')\neq 0$, where $u=(u',u'')\in \mathbb F_2^k \times \mathbb F_2^k $.
For $i=2, 3$, let $A_i$ be the matrices of size $k \times k$ defined as
\begin{align*}
L_i''(z)=z A_i,
\end{align*}
where $z=(z_1, \cdots, z_k) \in \mathbb F_2^k$.
Then,
\begin{align}
L_2''^*(u')+L_3''^*(u'')=&u'A_2^T+u'' A_3^T \nonumber \\
=& (u',u'')  \left [\begin{matrix}
      A_2^T \\
       A_3^T
\end{matrix} \right ].
\end{align}
Recall that $\mathcal L$ is a affine permutation. Hence, the rank of the linear function
$(L_1''(z), L_2''(z), L_3''(z))$ $=(0,L_2''(z), L_3''(z))$ from $\mathbb F_2^k$ to
$\mathbb F^n_2 \times F^k_2 \times F^k_2$  is $k$.
By $(L_1''(z), L_2''(z), L_3''(z))=z \left [ \begin{matrix} 0| A_2 |A_3 \end{matrix} \right]$, the rank of the matrix $\left [ \begin{matrix} A_2 |A_3 \end{matrix} \right]$ is $k$.
Thus, the rank of the matrix $\left [\begin{matrix}
      A_2^T \\
       A_3^T
\end{matrix} \right ]=\left [ \begin{matrix} A_2 |A_3 \end{matrix} \right]^T$ is also $k$.
Set
\begin{align*}
S'=& \{(u',u'')\in \mathbb F_2^k \times \mathbb F_2^k: L_2''^*(u')+L_3''^*(u'')=0\}\\
=&\{(u',u'')\in \mathbb F_2^k \times \mathbb F_2^k: (u',u'')  \left [\begin{matrix}
      A_2^T \\
       A_3^T
\end{matrix} \right ]=0\}.
\end{align*}
Then, $S'$ is a linear
subspace of dimension $k$.
By  the previous discussion, if $u=(u',u'')\in \mathbb F_2^n \setminus S'$,
the component function $<u, F''(x)>$ is bent. Thus, $F''(x)$ has  at least $2^n -2^k$ bent components.
From Theorem 3 in \cite{Pott2017}, $F''(x)$ has exactly $2^n -2^k$ bent components, which completes the proof.

\qed
 \end{proof}

\section{The non-existence of APN plateaued functions having the maximum number of bent components}\label{APNplateaued}
In \cite{Pott2017}, the authors asked if APN functions could have the maximum number of bent components or not. In this section we investigate the case of all APN plateaued functions. The result is given the following theorem.
\begin{theorem}
Let $F$ be a plateaued APN function defined on $\mathbb{F}_2^n$ (where $n\geq 4$ is an even positive integer). Then $F$ cannot have the maximum number of bent components.
\end{theorem}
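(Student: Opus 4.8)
The plan is to argue by contradiction: assume $F$ is a plateaued APN function on $\mathbb{F}_2^n$, $n=2k$, attaining the maximum of $2^n-2^k$ bent components, and extract a numerical impossibility from the fourth moment of the Walsh transform. For each nonzero $v$ write $f_v(x)=Tr_1^n(vF(x))$ for the corresponding component and recall the Wiener--Khinchin identity $\sum_{u}W_{f_v}(u)^4=2^n\sum_{a}\Delta_{f_v}(a)^2$, where $\Delta_{f_v}(a)=\sum_x(-1)^{f_v(x)+f_v(x+a)}$. Using $\Delta_{f_v}(a)=\sum_b\delta_F(a,b)(-1)^{Tr_1^n(vb)}$ together with orthogonality of characters, summing over all $v\in\mathbb{F}_2^n$ collapses the double sum to
\[
\sum_{v\in\mathbb{F}_2^n}\sum_{u\in\mathbb{F}_2^n}W_{f_v}(u)^4=2^{2n}\sum_{a,b\in\mathbb{F}_2^n}\delta_F(a,b)^2 .
\]
This single identity is the engine of the argument.

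Next I would evaluate both sides independently. For the right-hand side I use the APN hypothesis: $\delta_F(0,0)=2^n$, while for $a\neq0$ one has $\delta_F(a,b)\in\{0,2\}$ with exactly $2^{n-1}$ values of $b$ giving $2$; hence $\sum_{a,b}\delta_F(a,b)^2=2^{2n}+(2^n-1)\,2^{n+1}=3\cdot2^{2n}-2^{n+1}$, and the right-hand side becomes $3\cdot2^{4n}-2^{3n+1}$. For the left-hand side I use that $F$ is plateaued: the null component ($v=0$) contributes $2^{4n}$; each of the $2^n-2^k$ bent components has amplitude $0$ and contributes $2^{3n}$; and each non-bent component, i.e. each $v$ in $S\setminus\{0\}$ with $S$ the $k$-dimensional space from Theorem 3 of \cite{Pott2017}, is plateaued of some amplitude $t_v$ and contributes $2^{3n+t_v}$ (by Parseval a $t_v$-plateaued Boolean function has $2^{n-t_v}$ nonzero Walsh values, each of magnitude $2^{(n+t_v)/2}$). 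Equating the two evaluations, substituting $n=2k$, and using $|S\setminus\{0\}|=2^k-1$ yields, after cancellation, the clean relation
\[
\sum_{v\in S\setminus\{0\}}2^{t_v}=2^{2k}+2^{k}-2 .
\]

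I expect the main obstacle to lie entirely in this bookkeeping — getting the moment identity, the exact APN value $3\cdot2^{2n}-2^{n+1}$, the isolated $v=0$ term, and the amplitude contributions all correct — rather than in the final step, which is a one-line parity count. The key observation is that on an even-dimensional space a plateaued amplitude $t_v$ must satisfy $t_v\equiv n\equiv 0\pmod2$, so a \emph{non-bent} plateaued component has $t_v\ge2$ and each summand $2^{t_v}$ on the left is divisible by $4$; thus the left-hand side is $\equiv0\pmod4$. On the other hand, for $k\ge2$ (that is, $n\ge4$) both $2^{2k}$ and $2^{k}$ are divisible by $4$, so the right-hand side is $\equiv-2\equiv2\pmod4$. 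This contradiction shows no such $F$ can exist, which proves the theorem; in particular, since quadratic APN functions are plateaued, they too cannot attain the maximum number of bent components.
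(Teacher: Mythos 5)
Your proof is correct and follows essentially the same route as the paper: both rest on the fourth-moment identity $\sum_{u,v}W_F^4(u,v)=2^{2n}\sum_{a,b}\delta_F(a,b)^2$, evaluate it once via the APN property and once via the plateaued amplitudes, and finish with a mod-$4$ parity clash. The only cosmetic difference is that you isolate the non-bent components and derive $\sum_{v\in S\setminus\{0\}}2^{t_v}=2^{2k}+2^k-2$ by contradiction, whereas the paper computes $N_0\equiv 2\pmod 4$ directly and compares it with $2^n-2^{n/2}\equiv 0\pmod 4$.
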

\begin{proof}
Let $F$ be a plateaued APN function on $\mathbb{F}_2^n$.
Denote 
$$
N_t=\{v\in \mathbb{F}_2^n: W_F(u,v)
=\pm 2^{\frac{n+t}{2}}\}, 
$$
where $W_F(u,v)=\sum_{x\in \mathbb{F}_2^n}
(-1)^{v\cdot F(x)+u\cdot x}$ and $t$ is a positive integer ($0\leq t \leq n$).
We have 
\begin{align}\label{equ1}
\sum_{u,v\in \mathbb{F}_2^n} W_F^4(u,v)
=&\sum_{v\in F_2^n}(2^{\frac{n+t_v}{2}})^2
\sum_{u\in \mathbb{F}_2^n}W_F^2(u,v)\nonumber\\
=&2^n\sum_{v\in \mathbb{F}_2^n}2^{t_v}
\sum_{u\in \mathbb{F}_2^n}W_F^2(u,v)\nonumber\\
=& 2^{3n}\sum_{v\in \mathbb{F}_2^n}2^{t_v}\nonumber\\
=&2^{3n}(N_0+N_22^2+\cdots +N_n2^n).
\end{align}
Since $F$ is APN, we have
\begin{equation}\label{equ2}
\sum_{u,v\in \mathbb{F}_2^n} W_F^4(u,v)
=2^{3n}(3\cdot 2^n-2).
\end{equation}
From Equations (\ref{equ1}) and (\ref{equ2}), we have
$$
N_0+N_22^2+\cdots+N_n 2^n=3\cdot 2^n-2.
$$
Therefore, we have $N_0\equiv 2\mod 4$.
Since $n\geq 4$, $2^n-2^{\frac{n}{2}}
\equiv 0 \mod 4$. Hence,
$$
N_0\neq 2^n-2^{\frac{n}{2}}.
$$
Thus, $F$ does not have  the maximum number of bent components. In particular, quadratic
APN functions cannot have the maximum number of bent components.
\qed
 \end{proof}

\section{New constructions of bent component functions of vectorial functions }\label{newbent}
In this section we provide several functions defined as $G(x) = x  L(x)$
on ${\Bbb F}_{2^{2k}}$ such that the number of bent components $Tr^{2k}_1(\alpha F(x))$ equals $2^{2k}-2^k$, where $L(x)$ is a linear function on  ${\Bbb F}_{2^{2k}}$.
We first recall two lemmas which will be useful in our context.
\begin{lemma}\cite{Pott2017}\label{lemma adjoint}
Let $V = {\Bbb F}_{2^{2k}}$ and let $ <,>$ be a nondegenerate
symmetric bilinear form on $V$. If $\mathcal{L}:V\rightarrow V$ is linear, we
denote the adjoint operator by $\mathcal{L}^*$, i.e., $<x,\mathcal{L}(y)>=<\mathcal{L}^*(x),y>$
for all $x,y\in V$. The function $f : V \rightarrow {\Bbb F}_2$, defined by
$x\mapsto <x,\mathcal{L}(x)>$, is bent if and only if $\mathcal{L}+\mathcal{L}^*$ is invertible.
\end{lemma}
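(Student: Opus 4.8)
The plan is to view $f(x)=<x,\mathcal L(x)>$ as a quadratic Boolean function (note $f(0)=<0,\mathcal L(0)>=0$) and to invoke the classical criterion that a quadratic Boolean function is bent if and only if its associated symmetric bilinear (polar) form is nondegenerate. The whole argument then reduces to computing that polar form and identifying its radical with $\ker(\mathcal L+\mathcal L^*)$.

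First I would compute the polar form $B_f(x,y):=f(x+y)+f(x)+f(y)$. Expanding by bilinearity of $<,>$ and linearity of $\mathcal L$ gives $f(x+y)=<x+y,\mathcal L(x)+\mathcal L(y)>=f(x)+f(y)+<x,\mathcal L(y)>+<y,\mathcal L(x)>$, so that $B_f(x,y)=<x,\mathcal L(y)>+<y,\mathcal L(x)>$. Using the symmetry of $<,>$ together with the defining relation of the adjoint, $<y,\mathcal L(x)>=<\mathcal L^*(y),x>=<x,\mathcal L^*(y)>$, and therefore $B_f(x,y)=<x,(\mathcal L+\mathcal L^*)(y)>$.

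Next I would exploit the nondegeneracy of $<,>$: a vector $y$ lies in the radical of $B_f$ exactly when $<x,(\mathcal L+\mathcal L^*)(y)>=0$ for all $x$, which by nondegeneracy means $(\mathcal L+\mathcal L^*)(y)=0$. Hence the radical of $B_f$ equals $\ker(\mathcal L+\mathcal L^*)$, so $B_f$ is nondegenerate if and only if $\mathcal L+\mathcal L^*$ is injective, i.e.\ invertible (since $V$ is finite-dimensional). To close the argument I would either cite the bent-iff-nondegenerate criterion or reprove it via the standard squared-Walsh identity: substituting $y=x+z$ yields $W_f(a)^2=\sum_z(-1)^{f(z)+<a,z>}\sum_x(-1)^{B_f(x,z)}$, where the inner sum equals $2^n$ for $z$ in the radical and $0$ otherwise. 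Thus $W_f(a)^2=2^n\sum_{z\in\ker(\mathcal L+\mathcal L^*)}(-1)^{f(z)+<a,z>}$; a trivial radical forces $W_f(a)^2=2^n$ for all $a$ (so $f$ is bent), while a radical of dimension $d>0$ gives $W_f(a)^2\in\{0,2^{n+d}\}$ (so $f$ is $d$-plateaued and not bent).

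The main obstacle here is not computational but a matter of bookkeeping: one must apply the symmetry of $<,>$ and the definition of $\mathcal L^*$ in the correct order to rewrite $<y,\mathcal L(x)>$ as $<x,\mathcal L^*(y)>$, and then invoke nondegeneracy a second time to pass from ``$B_f(x,\cdot)$ vanishes identically'' to ``$(\mathcal L+\mathcal L^*)(y)=0$''. Since the lemma is quoted from Pott \emph{et al.}, the quadratic bent criterion itself may simply be cited, leaving only the adjoint computation and the radical identification to be carried out explicitly.
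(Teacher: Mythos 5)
The paper gives no proof of this lemma at all --- it is quoted verbatim from Pott et al.\ \cite{Pott2017} --- so there is no in-paper argument to compare against. Your proof is correct and is the standard one underlying the cited result: the computation $B_f(x,y)=<x,(\mathcal{L}+\mathcal{L}^*)(y)>$, the identification of the radical of $B_f$ with $\ker(\mathcal{L}+\mathcal{L}^*)$ via nondegeneracy of $<,>$, and the squared-Walsh argument showing a quadratic Boolean function is bent exactly when that radical is trivial are all carried out accurately.
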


\begin{lemma}\label{lemma L}\cite{Pott2017}
Let $V = {\Bbb F}_{2^{2k}}$ and $<x , y>=Tr^n_1(xy)$ be the trace bilinear form.
 If $\mathcal{L}:V\rightarrow V$ is defined by $\mathcal{L}(x)=\alpha x^{2^i} $, $\alpha\in V$ and for any $i=0,1,\cdots,n-1$, then $\mathcal{L}^*(x)= \alpha^{2^{n-i}} x^{2^{n-i}}$.
\end{lemma}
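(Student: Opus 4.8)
The plan is to compute the adjoint directly from its definition, as recalled in Lemma~\ref{lemma adjoint}: $\mathcal{L}^*$ is the unique linear map on $V$ satisfying $\langle x, \mathcal{L}(y)\rangle = \langle \mathcal{L}^*(x), y\rangle$ for all $x, y \in V$. Because the trace form $\langle x, y\rangle = Tr_1^n(xy)$ is nondegenerate, the adjoint is unique once it exists, so it suffices to produce the candidate $x \mapsto \alpha^{2^{n-i}} x^{2^{n-i}}$ and verify that it satisfies the defining identity.

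First I would expand the left-hand side using $\mathcal{L}(y) = \alpha y^{2^i}$, which gives $\langle x, \mathcal{L}(y)\rangle = Tr_1^n(\alpha x y^{2^i})$. The key tool is the Frobenius-invariance of the absolute trace: for every $z \in \mathbb{F}_{2^n}$ and every integer $j$ one has $Tr_1^n(z) = Tr_1^n(z^{2^j})$, since raising to the $2^j$-th power only cyclically permutes the summands $z, z^2, \dots, z^{2^{n-1}}$ that define $Tr_1^n$. Applying this with $j = n-i$ yields
$$ Tr_1^n(\alpha x y^{2^i}) = Tr_1^n\!\big((\alpha x y^{2^i})^{2^{n-i}}\big) = Tr_1^n\!\big(\alpha^{2^{n-i}} x^{2^{n-i}} y^{2^n}\big). $$

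The final step uses $y^{2^n} = y$ for all $y \in \mathbb{F}_{2^n}$, so the right-hand side becomes $Tr_1^n(\alpha^{2^{n-i}} x^{2^{n-i}} y) = \langle \alpha^{2^{n-i}} x^{2^{n-i}}, y\rangle$; comparing with the adjoint identity identifies $\mathcal{L}^*(x) = \alpha^{2^{n-i}} x^{2^{n-i}}$. I would close by remarking that this map is genuinely $\mathbb{F}_2$-linear, since the Frobenius $x \mapsto x^{2^{n-i}}$ is additive and scaling by the fixed constant $\alpha^{2^{n-i}}$ preserves additivity, so the candidate is a legitimate linear operator and hence, by uniqueness, the adjoint. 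There is no substantive obstacle; the only point that demands care is the exponent bookkeeping modulo $n$ — choosing the Frobenius power $2^{n-i}$ (rather than $2^i$) so as to strip the exponent off $y$, and using $2^i \cdot 2^{n-i} = 2^n$ together with $y^{2^n} = y$ to collapse it.
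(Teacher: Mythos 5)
Your proof is correct. The paper itself states this lemma as a citation from \cite{Pott2017} and gives no proof of its own, so there is nothing internal to compare against; your argument --- expanding $<x,\mathcal{L}(y)> = Tr_1^n(\alpha x y^{2^i})$, using Frobenius-invariance of the trace to raise the argument to the power $2^{n-i}$, and collapsing $y^{2^n}=y$ to read off the adjoint, with uniqueness guaranteed by nondegeneracy of the trace form --- is the standard and complete verification of this fact.
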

  In \cite{Pott2017}, the authors presented a construction of bent functions through adjoint operators. We start by providing a  simplified  proof of \cite[Theorem 4]{Pott2017} (which is the main result of their article).
  \begin{theorem}\label{theo in POtt}\cite[Theorem 4]{Pott2017}
  Let $V = {\Bbb F}_{2^{2k}}$ and $i$ be a nonnegative
integer. Then, the mapping $\mathrm{F}_\alpha^{i} $
defined by
\begin{displaymath}
  \mathrm{F}_\alpha^{i}(x)=Tr_{1}^{2k}\left(\alpha x^{2^i}(x+x^{2^k})\right)
  \end{displaymath}
is bent if and only if   $\alpha\notin {\Bbb F}_{2^{k}} $.
  \end{theorem}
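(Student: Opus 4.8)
The plan is to realize $F_\alpha^i$ as a quadratic form $x\mapsto \langle x,\mathcal{L}(x)\rangle$ for a suitable $\mathbb{F}_2$-linear map $\mathcal{L}$ and then invoke Lemma~\ref{lemma adjoint}, which reduces bentness to the invertibility of $\mathcal{L}+\mathcal{L}^{*}$. The only term that is not already of the shape $x\cdot(\text{linear})$ is $\alpha x^{2^i}x^{2^k}$, so first I would normalize it using the Frobenius-invariance of the trace, $Tr_1^{2k}(z)=Tr_1^{2k}(z^{2^k})$, which gives
$$Tr_1^{2k}\!\left(\alpha x^{2^i}x^{2^k}\right)=Tr_1^{2k}\!\left(\alpha^{2^k}x^{2^{i+k}}\,x\right).$$
Hence $F_\alpha^i(x)=Tr_1^{2k}\!\left(x\cdot\mathcal{L}(x)\right)=\langle x,\mathcal{L}(x)\rangle$ with $\mathcal{L}(x)=\alpha x^{2^i}+\alpha^{2^k}x^{2^{i+k}}=Tr_k^{2k}(\alpha x^{2^i})$. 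By Lemma~\ref{lemma adjoint}, $F_\alpha^i$ is bent if and only if $\mathcal{M}:=\mathcal{L}+\mathcal{L}^{*}$ is invertible, and Lemma~\ref{lemma L} yields $\mathcal{L}^{*}(x)=\alpha^{2^{2k-i}}x^{2^{2k-i}}+\alpha^{2^{2k-i}}x^{2^{k-i}}$, so that
$$\mathcal{M}(x)=\alpha x^{2^i}+\alpha^{2^k}x^{2^{i+k}}+\alpha^{2^{2k-i}}x^{2^{2k-i}}+\alpha^{2^{2k-i}}x^{2^{k-i}}.$$

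Next I would analyze $\ker\mathcal{M}$. Since raising to the power $2^i$ is a bijection of $\mathbb{F}_{2^{2k}}$, the equation $\mathcal{M}(x)=0$ is equivalent to $\mathcal{M}(x)^{2^i}=0$; applying $\alpha^{2^{2k}}=\alpha$ and $x^{2^{2k}}=x$ and regrouping, this collapses to
$$\alpha\left(x+x^{2^k}\right)+Tr_k^{2k}\!\left(\alpha^{2^i}x^{2^{2i}}\right)=0.$$
The decisive observation is that both $x+x^{2^k}=Tr_k^{2k}(x)$ and $Tr_k^{2k}(\alpha^{2^i}x^{2^{2i}})$ lie in $\mathbb{F}_{2^k}$, and that $Tr_k^{2k}$ annihilates every element of $\mathbb{F}_{2^k}$.

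For the ``if'' direction, assume $\alpha\notin\mathbb{F}_{2^k}$. Then $\alpha\cdot Tr_k^{2k}(x)\in\mathbb{F}_{2^k}$ forces $Tr_k^{2k}(x)=0$, i.e.\ $x\in\mathbb{F}_{2^k}$; substituting $x^{2^k}=x$ reduces the displayed identity to $Tr_k^{2k}(\alpha)^{2^i}x^{2^{2i}}=0$, and because $\alpha\notin\mathbb{F}_{2^k}$ we have $Tr_k^{2k}(\alpha)=\alpha+\alpha^{2^k}\neq 0$, so $x=0$. Thus $\ker\mathcal{M}=\{0\}$, $\mathcal{M}$ is invertible, and $F_\alpha^i$ is bent. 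For the ``only if'' direction, if $\alpha\in\mathbb{F}_{2^k}$ then for every $x\in\mathbb{F}_{2^k}$ both summands of the displayed identity vanish (the first since $x+x^{2^k}=0$, the second since $\alpha^{2^i}x^{2^{2i}}\in\mathbb{F}_{2^k}$), so $\mathbb{F}_{2^k}\subseteq\ker\mathcal{M}$ is a nonzero subspace, $\mathcal{M}$ is singular, and $F_\alpha^i$ is not bent.

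The hard part will be the kernel computation: the raw expression for $\mathcal{M}$ has four Frobenius terms and is opaque, and the whole argument hinges on the normalization by $2^i$ followed by recognizing the two $\mathbb{F}_{2^k}$-valued trace pieces. Once that structure is exposed, the dichotomy on $\alpha$ follows from the elementary fact that multiplication by $\alpha$ maps $\mathbb{F}_{2^k}$ into itself exactly when $\alpha\in\mathbb{F}_{2^k}$, so no further case analysis should be needed.
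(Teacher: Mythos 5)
Your proposal is correct and follows essentially the same route as the paper: both reduce bentness to the invertibility of $\mathcal{L}+\mathcal{L}^{*}$ via Lemma~\ref{lemma adjoint}, compute the adjoint with Lemma~\ref{lemma L}, and settle the kernel by observing that one summand lies in ${\Bbb F}_{2^k}$ while the other is $\alpha$ (or a conjugate of $\alpha$) times an element of ${\Bbb F}_{2^k}$, which forces $Tr_k^{2k}(x)=0$ and then $x^{2^i}\,Tr_k^{2k}(\alpha)=0$ exactly when $\alpha\notin{\Bbb F}_{2^k}$. Your normalization by the $2^i$-th power is just the Frobenius image of the paper's expression $Tr_k^{2k}(\alpha x^{2^i})+\alpha^{2^{2k-i}}\bigl(Tr_k^{2k}(x)\bigr)^{2^{k-i}}$, so the arguments coincide in substance.
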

  \begin{proof}
  From the proof of \cite[Theorem 4]{Pott2017}, we know
  $$ \mathcal{L}(x)+\mathcal{L}^*(x)=Tr^{2k}_k(\alpha x^{2^i})+\alpha^{2^{2k-i}}\left(Tr^{2k}_k( x)\right)^{2^{k-i}}.$$
  From Lemma \ref{lemma adjoint}, we need to show that
 $\mathcal{L}(x)+\mathcal{L}^*(x)=0$ if and only if $x=0$.

 Let $\nabla_{a}=\{x|Tr^{2k}_k(x)=a, a\in {\Bbb F}_{2^{k}} \} $.  We all know $ Tr^{2k}_k( x) $ is a surjection from ${\Bbb F}_{2^{2k}} $ to ${\Bbb F}_{2^{k}}$ and $\|\nabla_{a}\|=2^{2k-k} $ for any $a\in {\Bbb F}_{2^{k}}$.
 We also know $\nabla_{0}= {\Bbb F}_{2^{k}}$.

  If $\alpha\notin {\Bbb F}_{2^{k}} $, then  $\mathcal{L}(x)+\mathcal{L}^*(x)=0$ if and only if  \begin{equation}\label{equ 1bent2}
  \left\{\begin{array}{c}
Tr^{2k}_k(\alpha x^{2^i})=0,\\
  Tr^{2k}_k( x)=0,
  \end{array} \right.
  \end{equation}
  i.e.,  $x=0$.
  If for any $ x\neq 0$, we always have  $\mathcal{L}(x)+\mathcal{L}^*(x)\neq 0$, then $\alpha\notin {\Bbb F}_{2^{k}} $. In fact, if $\alpha\in {\Bbb F}_{2^{k}} $, then
  $$\mathcal{L}(x)+\mathcal{L}^*(x)=\alpha Tr^{2k}_k( x^{2^i})+\left(\alpha Tr^{2k}_k( x)\right)^{2^{k-i}}=\alpha Tr^{2k}_k( x)+\left(\alpha Tr^{2k}_k( x)\right)^{2^{k-i}}.$$ Further,  $\mathcal{L}(x)+\mathcal{L}^*(x)= 0$ for any $x\in {\Bbb F}_{2^{k}}$.
   Thus,
 we have   $\mathrm{F}_\alpha^{i}(x)$
is bent if and only if   $\alpha\notin {\Bbb F}_{2^{k}}$.
  \qed
  \end{proof}
  Now, we are going to present a first new family of bent functions through adjoint operators.
\begin{theorem}\label{theo-new_bent}
Let $V = {\Bbb F}_{2^{2k}}$ and $i$ be a nonnegative
integer. Let $t_1,t_2$ be two positive integers such that $0\leq t_1,t_2\leq k$ and both $ z^{2^{k-t_1}-1}+z^{2^{k-t_2}-1}+1=0$ and $ z^{2^{t_1}-1}+ z^{2^{t_2}-1}+1=0$ have no solutions on ${\Bbb F}_{2^{k}}$.  Then, the function $\mathrm{F}_\alpha^{i} $ defined on $V$ by
  \begin{equation}\label{equa bent0}
  \mathrm{F}_\alpha^{i}(x)=Tr_1^{2k}\left(\alpha x^{2^i}(x+x^{2^k}+x^{2^{t_1}}+x^{2^{t_1+k}}+x^{2^{t_2}}+x^{2^{t_2+k}})\right)
  \end{equation}
  is bent if and only if $\alpha\notin {\Bbb F}_{2^{k}} $.

\end{theorem}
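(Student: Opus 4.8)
The plan is to apply Lemma~\ref{lemma adjoint} exactly as in the proof of Theorem~\ref{theo in POtt}. Writing $L(x)=x+x^{2^k}+x^{2^{t_1}}+x^{2^{t_1+k}}+x^{2^{t_2}}+x^{2^{t_2+k}}=Tr^{2k}_k\bigl(x+x^{2^{t_1}}+x^{2^{t_2}}\bigr)$, we have $\mathrm{F}_\alpha^{i}(x)=Tr_1^{2k}(\alpha x^{2^i}L(x))$, a quadratic form. First I would exhibit an $\mathbb{F}_2$-linear map $\mathcal{L}:V\to V$ with $\mathrm{F}_\alpha^{i}(x)=\langle x,\mathcal{L}(x)\rangle$; by Lemma~\ref{lemma adjoint} it then suffices to decide for which $\alpha$ the map $\mathcal{L}+\mathcal{L}^*$ is invertible.

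To produce $\mathcal{L}$, I would rewrite each monomial $Tr_1^{2k}(\alpha x^{2^i+2^s})$ appearing in the expansion as $\langle x,(\text{linear})(x)\rangle$ by raising the argument of the trace to a suitable power of the Frobenius map (which fixes the absolute trace); collecting over the three pairs $s\in\{0,t_1,t_2\}$ gives $\mathcal{L}(x)=\sum_{s}\bigl(\alpha^{2^{2k-s}}x^{2^{i-s}}+\alpha^{2^{k-s}}x^{2^{i+k-s}}\bigr)$. Applying Lemma~\ref{lemma L} term by term to compute $\mathcal{L}^*$ and simplifying with $Tr^{2k}_k$, I expect to obtain, with $A:=Tr^{2k}_k(\alpha x^{2^i})$ and $T:=Tr^{2k}_k(x)$,
\begin{align*}
(\mathcal{L}+\mathcal{L}^*)(x)=\bigl(A+A^{2^{k-t_1}}+A^{2^{k-t_2}}\bigr)+\alpha^{2^{2k-i}}\bigl(T^{2^{2k-i}}+T^{2^{t_1+2k-i}}+T^{2^{t_2+2k-i}}\bigr),
\end{align*}
which reduces to the formula in Theorem~\ref{theo in POtt} when the $t_j$-terms are dropped (using $T\in\mathbb{F}_{2^k}$). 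The crucial structural point is that \emph{both} bracketed factors lie in $\mathbb{F}_{2^k}$.

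Then I would argue as follows. Denote by $u$ and $v$ the two bracketed factors, so $u,v\in\mathbb{F}_{2^k}$ and $(\mathcal{L}+\mathcal{L}^*)(x)=u\cdot 1+v\cdot\alpha^{2^{2k-i}}$. If $\alpha\notin\mathbb{F}_{2^k}$ then $\alpha^{2^{2k-i}}\notin\mathbb{F}_{2^k}$, so $\{1,\alpha^{2^{2k-i}}\}$ is an $\mathbb{F}_{2^k}$-basis of $V$, and $(\mathcal{L}+\mathcal{L}^*)(x)=0$ forces $u=v=0$. Setting $W:=T^{2^{2k-i}}$, the equation $v=0$ reads $W+W^{2^{t_1}}+W^{2^{t_2}}=0$; if $W\neq 0$, dividing by $W$ produces a root in $\mathbb{F}_{2^k}$ of $z^{2^{t_1}-1}+z^{2^{t_2}-1}+1=0$, contradicting the hypothesis, so $W=0$, i.e. $T=Tr^{2k}_k(x)=0$ and hence $x\in\mathbb{F}_{2^k}$. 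For such $x$ one computes $A=Tr^{2k}_k(\alpha)\,x^{2^i}$, and $u=0$ (namely $A+A^{2^{k-t_1}}+A^{2^{k-t_2}}=0$) forces $A=0$ in the same way via the second hypothesis $z^{2^{k-t_1}-1}+z^{2^{k-t_2}-1}+1=0$; since $\alpha\notin\mathbb{F}_{2^k}$ gives $Tr^{2k}_k(\alpha)\neq 0$, we conclude $x=0$. Thus $\mathcal{L}+\mathcal{L}^*$ is invertible and $\mathrm{F}_\alpha^{i}$ is bent. For the converse, if $\alpha\in\mathbb{F}_{2^k}$ then every $x\in\mathbb{F}_{2^k}$ gives $T=0$ and $A=0$, whence $(\mathcal{L}+\mathcal{L}^*)(x)=0$ on all of $\mathbb{F}_{2^k}$, so $\mathcal{L}+\mathcal{L}^*$ is singular and $\mathrm{F}_\alpha^{i}$ is not bent.

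The main obstacle is the bookkeeping in the second step: keeping the Frobenius exponents consistent modulo $2k$ for powers of $x$ and modulo $k$ for the $\mathbb{F}_{2^k}$-valued quantities $A$ and $T$, so that the collected expression genuinely factors as $u+\alpha^{2^{2k-i}}v$ with $u,v\in\mathbb{F}_{2^k}$ and so that the two factored trinomials match the two hypothesized equations exactly. Everything else is a direct generalization of Theorem~\ref{theo in POtt}; in particular, the linear independence of $1$ and $\alpha^{2^{2k-i}}$ over $\mathbb{F}_{2^k}$ is the single idea that converts bentness into the two stated solvability conditions over $\mathbb{F}_{2^k}$.
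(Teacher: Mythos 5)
Your proposal follows essentially the same route as the paper's own proof: the same decomposition $\mathrm{F}_\alpha^{i}(x)=Tr_1^{2k}(x\mathcal{L}(x))$ with $\mathcal{L}(x)=A+A^{2^{k-t_1}}+A^{2^{k-t_2}}$ for $A=Tr^{2k}_k(\alpha x^{2^i})$, the same adjoint computation via Lemma~\ref{lemma L}, and the same reduction (via Lemma~\ref{lemma adjoint}) of the invertibility of $\mathcal{L}+\mathcal{L}^*$ to the two trinomial no-solution hypotheses, ending with the system $Tr^{2k}_k(\alpha x^{2^i})=Tr^{2k}_k(x)=0$. If anything, you are slightly more explicit than the paper at the point where the linear independence of $1$ and $\alpha^{2^{2k-i}}$ over $\mathbb{F}_{2^k}$ is needed to split $(\mathcal{L}+\mathcal{L}^*)(x)=0$ into the two bracketed conditions, a step the paper handles by deferring to the proof of Theorem~\ref{theo in POtt}.
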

\begin{proof}
We have
\begin{equation}\label{equa bent1}
\begin{array}{rl}
  \mathrm{F}_\alpha^{i}(x)=&Tr_{1}^{2k}\left(\alpha x^{2^i}(x+x^{2^k}+x^{2^{t_1}}+x^{2^{t_1+k}}+x^{2^{t_2}}+x^{2^{t_2+k}})\right)\\
=&Tr_{1}^{2k}(x\alpha x^{2^i})+Tr_{1}^{2k}(x\alpha^{2^k} x^{2^{i+k}})+Tr_{1}^{2k}(x\alpha^{2^{2k-t_1}} x^{2^{i+2k-t_1}})\\
&+Tr_{1}^{2k}(x\alpha^{2^{k-t_1}} x^{2^{i+k-t_1}})+Tr_{1}^{2k}(x\alpha^{2^{2k-t_2}} x^{2^{i+2k-t_2}})+Tr_{1}^{2k}(x\alpha^{2^{k-t_2}} x^{2^{i+k-t_2}})\\
=&Tr_{1}^{2k}(x\mathcal{L}(x)),
  \end{array}
  \end{equation}
where
\begin{displaymath}\begin{array}{rl}
\mathcal{L}(x)=&\alpha x^{2^i}+\alpha^{2^k} x^{2^{i+k}}+\alpha^{2^{2k-t_1}} x^{2^{i+2k-t_1}}+\alpha^{2^{k-t_1}} x^{2^{i+k-t_1}}\\
&+\alpha^{2^{2k-t_2}} x^{2^{i+2k-t_2}}+\alpha^{2^{k-t_2}} x^{2^{i+k-t_2}}\\
=&\alpha x^{2^i}+(\alpha x^{2^i})^{2^k}+\alpha^{2^{k-t_1}} x^{2^{i+k-t_1}}+(\alpha^{2^{k-t_1}} x^{2^{i+k-t_1}})^{2^k}\\
   &+\alpha^{2^{k-t_2}} x^{2^{i+k-t_2}}+(\alpha^{2^{k-t_2}} x^{2^{i+k-t_2}})^{2^k}\\
  =&\left(\alpha x^{2^i}+(\alpha x^{2^i})^{2^k}\right)+\left(\alpha x^{2^i}+(\alpha x^{2^i})^{2^k}\right)^{2^{k-t_1}}+\left(\alpha x^{2^i}+(\alpha x^{2^i})^{2^k}\right)^{2^{k-t_2}}
\end{array}
\end{displaymath}
According to Lemma \ref{lemma L}, the adjoint operator $ \mathcal{L}^*(x)$ is
\begin{equation}\label{equa adjoint1}
\begin{array}{rl}
 \mathcal{L}^*(x)=&\alpha^{2^{2k-i}} x^{2^{2k-i}}+\alpha^{2^{2k-i}} x^{2^{k-i}}+\alpha^{2^{2k-i}} x^{2^{t_1-i}}\\
  &+\alpha^{2^{2k-i}} x^{2^{k+t_1-i}}+\alpha^{2^{2k-i}} x^{2^{t_2-i}}+\alpha^{2^{2k-i}} x^{2^{k+t_2-i}}\\
  = &\alpha^{2^{2k-i}} \left(x^{2^{2k-i}}+ x^{2^{k-i}}+ x^{2^{t_1-i}}
  + x^{2^{k+t_1-i}}+ x^{2^{t_2-i}}+ x^{2^{k+t_2-i}}\right)\\
  =&\alpha^{2^{2k-i}} \left( x^{2^{k-i}}+(x^{2^{k-i}})^{2^k}+ x^{2^{t_1-i}}
  + (x^{2^{t_1-i}})^{2^k}+ x^{2^{t_2-i}}+ (x^{2^{t_2-i}})^{2^k}\right)\\
  =&\alpha^{2^{2k-i}} \left( (x+x^{2^k})^{2^{k-i}}+ (x+x^{2^k})^{2^{t_1-i}}+(x+x^{2^k})^{2^{t_2-i}}
  \right)\\
  =&\alpha^{2^{2k-i}} \left( (x+x^{2^k})^{2^{k-i}}+ (x+x^{2^k})^{2^{t_1+k-i}}+(x+x^{2^k})^{2^{t_2+k-i}}
  \right)
  \end{array}
  \end{equation}
Thus, we have
\begin{displaymath}
\begin{array}{rl}
   \mathcal{L}(x)+\mathcal{L}^*(x)
  =&\left(\alpha x^{2^i}+(\alpha x^{2^i})^{2^k}\right)+\left(\alpha x^{2^i}+(\alpha x^{2^i})^{2^k}\right)^{2^{k-t_1}}+\left(\alpha x^{2^i}+(\alpha x^{2^i})^{2^k}\right)^{2^{k-t_2}} \\
   &+\alpha^{2^{2k-i}} \left( (x+x^{2^k})^{2^{k-i}}+ (x+x^{2^k})^{2^{t_1+k-i}}+(x+x^{2^k})^{2^{t_2+k-i}}
  \right).
  \end{array}
  \end{displaymath}
  Note that we have $\mathcal{L}(x)\in {\Bbb F}_{2^{k}} $ and $ (x+x^{2^k})^{2^{k-i}}+ (x+x^{2^k})^{2^{t_1+k-i}}+(x+x^{2^k})^{2^{t_2+k-i}}\in {\Bbb F}_{2^{k}}$ for any $x\in {\Bbb F}_{2^{2k}}$.  From Lemma \ref{lemma adjoint}, it is sufficient to
show that $\mathcal{L}(x)+\mathcal{L}^*(x)$ is invertible. That is,  we need to show that
$\mathcal{L}(x)+\mathcal{L}^*(x)=0$ if and only if $x=0$.

 Since both $ z^{2^{k-t_1}-1}+z^{2^{k-t_2}-1}+1=0$ and $ z^{2^{t_1}-1}+ z^{2^{t_2}-1}+1=0$ have no solution in ${\Bbb F}_{2^{k}}$, we have both
   $\left(\alpha x^{2^i}+(\alpha x^{2^i})^{2^k}\right)+\left(\alpha x^{2^i}+(\alpha x^{2^i})^{2^k}\right)^{2^{k-t_1}}+\left(\alpha x^{2^i}+(\alpha x^{2^i})^{2^k}\right)^{2^{k-t_2}}=0$ and $ (x+x^{2^k})^{2^{k-i}}+ (x+x^{2^k})^{2^{t_1+k-i}}+(x+x^{2^k})^{2^{t_2+k-i}}=0$ if and only if
    \begin{equation}\label{equ bent2}
  \left\{\begin{array}{c}
\alpha x^{2^i}+(\alpha x^{2^i})^{2^k}=0,\\
  x+x^{2^k}=0.
  \end{array} \right.
  \end{equation}

  From the proof of Theorem \ref{theo in POtt}, when both $ z^{2^{k-t_1}-1}+z^{2^{k-t_2}-1}+1=0$ and $ z^{2^{t_1}-1}+ z^{2^{t_2}-1}+1=0$ have no solution in ${\Bbb F}_{2^{k}}$,  we have   $\mathrm{F}_\alpha^{i}(x) $ is bent if and only if $\alpha\notin {\Bbb F}_{2^{k}} $.
  \qed
\end{proof}
  We immediately have the following statement by setting $t_2= k-t_1$ in the previous theorem.
\begin{corollary}\label{cor thr}
Let $V = {\Bbb F}_{2^{2k}}$ and $i$ be a nonnegative
integer. Let $t_1,t_2$ be two positive integers such that $t_1+t_2= k$ and $ z^{2^{t_1}-1}+z^{2^{t_2}-1}+1=0$ has no solution in  ${\Bbb F}_{2^{k}}$.  Then, the mapping $\mathrm{F}_\alpha^{i} $ defined on $V$ by
  \begin{displaymath}
  \mathrm{F}_\alpha^{i}(x)=Tr_{1}^{2k}\left(\alpha G(x)\right)
  \end{displaymath}
  is bent if and only if $\alpha\notin {\Bbb F}_{2^{k}} $, where $G(x)=x^{2^i}\left(x+x^{2^k}+x^{2^{t_1}}+x^{2^{t_1+k}}+x^{2^{t_2}}+x^{2^{t_2+k}}\right)$.
\end{corollary}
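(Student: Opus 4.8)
The plan is to recognize that this statement is nothing more than the specialization $t_2 = k - t_1$ of Theorem~\ref{theo-new_bent}, so the entire task reduces to checking that the two separate non-solvability hypotheses imposed there collapse into the single condition stated in the corollary. Theorem~\ref{theo-new_bent} requires that \emph{both} $z^{2^{k-t_1}-1}+z^{2^{k-t_2}-1}+1=0$ and $z^{2^{t_1}-1}+z^{2^{t_2}-1}+1=0$ have no solution in ${\Bbb F}_{2^{k}}$; I would argue that the constraint $t_1+t_2=k$ makes these two equations coincide, so assuming the second one alone is enough to invoke the theorem verbatim.

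The first step is the index bookkeeping. Substituting $t_2 = k - t_1$ (equivalently $t_1 = k - t_2$) gives $k - t_1 = t_2$ and $k - t_2 = t_1$, so the exponents in the first hypothesis satisfy $2^{k-t_1}-1 = 2^{t_2}-1$ and $2^{k-t_2}-1 = 2^{t_1}-1$. Hence the equation $z^{2^{k-t_1}-1}+z^{2^{k-t_2}-1}+1=0$ becomes $z^{2^{t_2}-1}+z^{2^{t_1}-1}+1=0$, which is exactly $z^{2^{t_1}-1}+z^{2^{t_2}-1}+1=0$ after commuting the two monomials. Thus under $t_1+t_2=k$ the two hypotheses of Theorem~\ref{theo-new_bent} are literally the same condition, namely the one stated in the corollary.

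Once this identification is made, I would simply appeal to Theorem~\ref{theo-new_bent}: its hypotheses are now met by the single assumption that $z^{2^{t_1}-1}+z^{2^{t_2}-1}+1=0$ has no solution in ${\Bbb F}_{2^{k}}$, and its conclusion states precisely that $\mathrm{F}_\alpha^{i}(x)=Tr_{1}^{2k}(\alpha G(x))$ with $G(x)=x^{2^i}(x+x^{2^k}+x^{2^{t_1}}+x^{2^{t_1+k}}+x^{2^{t_2}}+x^{2^{t_2+k}})$ is bent if and only if $\alpha\notin {\Bbb F}_{2^{k}}$. This completes the argument.

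Since the corollary is a direct specialization, there is no genuine analytic obstacle; the only point requiring care is the exponent arithmetic that witnesses the coincidence of the two conditions. The one small sanity check worth noting is that the constraint $t_1+t_2=k$ together with $t_1,t_2$ positive keeps both indices in the admissible range $0\le t_1,t_2\le k$ demanded by Theorem~\ref{theo-new_bent}, so the specialization is legitimate.
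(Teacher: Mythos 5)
Your proposal is correct and follows exactly the paper's own route: the paper derives this corollary by the one-line remark that it is Theorem~\ref{theo-new_bent} specialized to $t_2=k-t_1$, and your exponent bookkeeping showing that the two non-solvability hypotheses coincide under $t_1+t_2=k$ is just the explicit justification of that remark. Nothing is missing.
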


The previous construction given by Theorem \ref{theo-new_bent} can be  generalized as follows.
\begin{theorem}\label{theorem 3}
Let $V = {\Bbb F}_{2^{2k}}$ and $i$ be a nonnegative
integer. Let $t_1,t_2$ be two positive integers such that $0\leq t_1,t_2\leq k$  and both $ (\gamma^{(1)})^{2^{k-t_1}} z^{2^{k-t_1}-1}+(\gamma^{(2)})^{2^{k-t_2}}z^{2^{k-t_2}-1}+1=0$ and $ (\gamma^{(1)})^{2^{k-i}} z^{2^{t_1}-1}+(\gamma^{(2)})^{2^{k-i}} z^{2^{t_2}-1}+1=0$ have no solution in  ${\Bbb F}_{2^{k}}$, where $\gamma^{(1)}, \gamma^{(2)}\in {\Bbb F}_{2^{k}}$. Then, the mapping $\mathrm{F}_\alpha^{i} $ defined by
  \begin{equation}\label{equa bent 30}
  \mathrm{F}_\alpha^{i}(x)=Tr_{1}^{2k}\left(\alpha x^{2^i}\left(x+x^{2^k}+\gamma^{(1)}(x^{2^{t_1}}+x^{2^{t_1+k}})
  +\gamma^{(2)}(x^{2^{t_2}}+x^{2^{t_2+k}})\right)\right)
  \end{equation}
  is bent if and only if $\alpha\notin {\Bbb F}_{2^{k}} $.

\end{theorem}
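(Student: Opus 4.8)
The plan is to reduce Theorem~\ref{theorem 3} to the structure already established in Theorem~\ref{theo-new_bent} by repeating the adjoint-operator computation with the coefficients $\gamma^{(1)},\gamma^{(2)}$ inserted. First I would write $\mathrm{F}_\alpha^i(x)=Tr_1^{2k}(x\mathcal{L}(x))$ where, exactly as in~(\ref{equa bent1}), the factorization through $Tr_k^{2k}$ gives
\begin{displaymath}
\mathcal{L}(x)=\left(\alpha x^{2^i}+(\alpha x^{2^i})^{2^k}\right)+\gamma^{(1)}\left(\alpha x^{2^i}+(\alpha x^{2^i})^{2^k}\right)^{2^{k-t_1}}+\gamma^{(2)}\left(\alpha x^{2^i}+(\alpha x^{2^i})^{2^k}\right)^{2^{k-t_2}}.
\end{displaymath}
Here I would use that $\gamma^{(1)},\gamma^{(2)}\in{\Bbb F}_{2^k}$ are fixed by the $2^k$-power Frobenius, so they can be pulled outside the inner $(\cdot)+(\cdot)^{2^k}$ brackets without disturbing the ${\Bbb F}_{2^k}$-membership of $\mathcal{L}(x)$. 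Then, invoking Lemma~\ref{lemma L} term by term as in~(\ref{equa adjoint1}), I would compute the adjoint
\begin{displaymath}
\mathcal{L}^*(x)=\alpha^{2^{2k-i}}\left((x+x^{2^k})^{2^{k-i}}+\gamma^{(1)}(x+x^{2^k})^{2^{t_1+k-i}}+\gamma^{(2)}(x+x^{2^k})^{2^{t_2+k-i}}\right),
\end{displaymath}
where again the coefficients $\gamma^{(j)}\in{\Bbb F}_{2^k}$ survive the $2^{2k-i}$-Frobenius only up to their own conjugates, which is precisely why the twisted exponents $(\gamma^{(j)})^{2^{k-i}}$ appear in the second hypothesis equation.

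Next I would observe, as in the proof of Theorem~\ref{theo-new_bent}, that $\mathcal{L}(x)\in{\Bbb F}_{2^k}$ and the inner parenthesized part of $\mathcal{L}^*(x)$ also lies in ${\Bbb F}_{2^k}$ for every $x$, since each is built from $Tr_k^{2k}$-type expressions. By Lemma~\ref{lemma adjoint} it then suffices to show that $\mathcal{L}+\mathcal{L}^*$ is invertible, i.e.\ that $\mathcal{L}(x)+\mathcal{L}^*(x)=0$ forces $x=0$. Writing $u=\alpha x^{2^i}+(\alpha x^{2^i})^{2^k}\in{\Bbb F}_{2^k}$ and $w=x+x^{2^k}\in{\Bbb F}_{2^k}$, the two ${\Bbb F}_{2^k}$-valued summands are $u+\gamma^{(1)}u^{2^{k-t_1}}+\gamma^{(2)}u^{2^{k-t_2}}$ and $\alpha^{2^{2k-i}}\bigl(w^{2^{k-i}}+\gamma^{(1)}w^{2^{t_1+k-i}}+\gamma^{(2)}w^{2^{t_2+k-i}}\bigr)$. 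The first hypothesis, that $(\gamma^{(1)})^{2^{k-t_1}}z^{2^{k-t_1}-1}+(\gamma^{(2)})^{2^{k-t_2}}z^{2^{k-t_2}-1}+1=0$ has no root in ${\Bbb F}_{2^k}$, is designed so that the second ${\Bbb F}_{2^k}$-linearized expression vanishes only when $w=0$; the second hypothesis handles the $u$-expression the same way after extracting the appropriate Frobenius twist. This splits $\mathcal{L}(x)+\mathcal{L}^*(x)=0$ into the system
\begin{equation*}
\left\{\begin{array}{c}
\alpha x^{2^i}+(\alpha x^{2^i})^{2^k}=0,\\
x+x^{2^k}=0,
\end{array}\right.
\end{equation*}
which is identical to~(\ref{equ bent2}). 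From there I would quote the endgame of Theorem~\ref{theo in POtt}: this system has only the solution $x=0$ exactly when $\alpha\notin{\Bbb F}_{2^k}$, and conversely if $\alpha\in{\Bbb F}_{2^k}$ the whole of ${\Bbb F}_{2^k}$ lies in the kernel, so $\mathcal{L}+\mathcal{L}^*$ fails to be invertible. This yields the claimed equivalence.

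The main obstacle I expect is verifying that the two no-root hypotheses genuinely linearize each bracket to a map whose kernel is trivial, keeping careful track of how the Frobenius $x\mapsto x^{2^{k-i}}$ (and its inverse) acts on the constants $\gamma^{(j)}$. The reason the two displayed polynomial conditions differ — one with exponents $2^{k-t_j}-1$ and plain $\gamma^{(j)}$, the other with exponents $2^{t_j}-1$ and twisted $(\gamma^{(j)})^{2^{k-i}}$ — is exactly this asymmetry between $\mathcal{L}$ and $\mathcal{L}^*$, and getting the twist right is the only delicate bookkeeping. Once one checks that a nonzero $w$ (resp.\ $u$) would produce a root $z=u^{\,\cdot}/w^{\,\cdot}$ of the respective polynomial equation of the prescribed shape in ${\Bbb F}_{2^k}$, contradicting the hypothesis, the rest reduces verbatim to the already-proven Theorem~\ref{theo in POtt}, so no genuinely new idea beyond Theorem~\ref{theo-new_bent} is required — only the insertion of the two field coefficients into the earlier argument.
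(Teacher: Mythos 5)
Your overall strategy is exactly the paper's: write $\mathrm{F}_\alpha^i(x)=Tr_1^{2k}(x\mathcal{L}(x))$, compute $\mathcal{L}^*$ via Lemma~\ref{lemma L}, observe that $\mathcal{L}(x)$ and the bracketed part of $\mathcal{L}^*(x)$ are linearized expressions in $u=Tr_k^{2k}(\alpha x^{2^i})$ and $w=Tr_k^{2k}(x)$ respectively, use the two no-root hypotheses to force $u=w=0$, and finish by the endgame of Theorem~\ref{theo in POtt}. However, your displayed formulas for $\mathcal{L}$ and $\mathcal{L}^*$ are wrong at precisely the point you yourself flag as the delicate one. Rewriting $Tr_1^{2k}\bigl(\alpha\gamma^{(1)}x^{2^i}x^{2^{t_1}}\bigr)$ in the form $Tr_1^{2k}(x\cdot(\cdot))$ requires applying the Frobenius $y\mapsto y^{2^{2k-t_1}}$ to the \emph{entire} product, so the coefficient that actually appears in $\mathcal{L}$ is $(\gamma^{(1)})^{2^{2k-t_1}}=(\gamma^{(1)})^{2^{k-t_1}}$, not $\gamma^{(1)}$: the $2^k$-invariance of $\gamma^{(1)}$ that you invoke only reduces the exponent $2^{2k-t_1}$ to $2^{k-t_1}$; it does not remove the twist. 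The paper's $\mathcal{L}$ is $u+(\gamma^{(1)})^{2^{k-t_1}}u^{2^{k-t_1}}+(\gamma^{(2)})^{2^{k-t_2}}u^{2^{k-t_2}}$, and it is these twisted coefficients that make the first hypothesis polynomial $(\gamma^{(1)})^{2^{k-t_1}}z^{2^{k-t_1}-1}+(\gamma^{(2)})^{2^{k-t_2}}z^{2^{k-t_2}-1}+1$ the correct one. With your untwisted $\mathcal{L}$ the kernel condition would instead be governed by $\gamma^{(1)}z^{2^{k-t_1}-1}+\gamma^{(2)}z^{2^{k-t_2}-1}+1$, so the hypothesis as stated is never actually used and the reduction to the system~(\ref{equ bent2}) is not justified. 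The same omission occurs in your displayed $\mathcal{L}^*$, where the factors $(\gamma^{(j)})^{2^{k-i}}$ are missing, although there your prose does acknowledge the twist.

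Once those two displays are corrected, namely
\begin{displaymath}
\mathcal{L}(x)=u+(\gamma^{(1)})^{2^{k-t_1}}u^{2^{k-t_1}}+(\gamma^{(2)})^{2^{k-t_2}}u^{2^{k-t_2}},\quad
\mathcal{L}^*(x)=\alpha^{2^{2k-i}}\bigl(w^{2^{k-i}}+(\gamma^{(1)})^{2^{k-i}}w^{2^{t_1+k-i}}+(\gamma^{(2)})^{2^{k-i}}w^{2^{t_2+k-i}}\bigr),
\end{displaymath}
your argument coincides with the paper's proof of Theorem~\ref{theorem 3}, including the observation that both summands lie in ${\Bbb F}_{2^k}$ and the final appeal to the proof of Theorem~\ref{theo in POtt} for both directions of the equivalence. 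So the gap is a bookkeeping error rather than a conceptual one, but as written the proof establishes a theorem with different hypotheses (the $\gamma^{(j)}$ replaced by suitable Frobenius conjugates), not the one stated.
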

\begin{proof}
We have
\begin{equation}\label{equa bent31}
\begin{array}{rl}
  \mathrm{F}_\alpha^{i}(x)
=&Tr_{1}^{2k}(x\mathcal{L}(x)),
  \end{array}
  \end{equation}
where
\begin{displaymath}\begin{array}{rl}
\mathcal{L}(x)=&\alpha x^{2^i}+\alpha^{2^k} x^{2^{i+k}}+(\gamma^{(1)})^{2^{k-t_1}} \left(\alpha^{2^{2k-t_1}} x^{2^{i+2k-t_1}}+\alpha^{2^{k-t_1}} x^{2^{i+k-t_1}}\right)\\
+&(\gamma^{(2)})^{2^{k-t_2}} \left(\alpha^{2^{2k-t_2}} x^{2^{i+2k-t_2}}+\alpha^{2^{k-t_2}} x^{2^{i+k-t_2}}\right)\\
  =&\left(\alpha x^{2^i}+(\alpha x^{2^i})^{2^k}\right)+(\gamma^{(1)})^{2^{k-t_1}} \left(\alpha x^{2^i}+(\alpha x^{2^i})^{2^k}\right)^{2^{k-t_1}}\\
  +&(\gamma^{(2)})^{2^{k-t_2}} \left(\alpha x^{2^i}+(\alpha x^{2^i})^{2^k}\right)^{2^{k-t_2}}
\end{array}
\end{displaymath}
The adjoint operator $ \mathcal{L}^*(x)$ is
\begin{equation}\label{equa adjoint31}
\begin{array}{rl}
 \mathcal{L}^*(x)=&\alpha^{2^{2k-i}} x^{2^{2k-i}}+\alpha^{2^{2k-i}} x^{2^{k-i}}+(\gamma^{(1)})^{2^{k-i}}\left(\alpha^{2^{2k-i}} x^{2^{t_1-i}}\right.\\
  &\left.+\alpha^{2^{2k-i}} x^{2^{k+t_1-i}}\right)+(\gamma^{(2)})^{2^{k-i}}\left(\alpha^{2^{2k-i}} x^{2^{t_2-i}}+\alpha^{2^{2k-i}} x^{2^{k+t_2-i}}\right)\\
  =&\alpha^{2^{2k-i}} \left( (x+x^{2^k})^{2^{k-i}}+ (\gamma^{(1)})^{2^{k-i}}(x+x^{2^k})^{2^{t_1-i}}+(\gamma^{(2)})^{2^{k-i}}(x+x^{2^k})^{2^{t_2-i}}
  \right)\\
  =&\alpha^{2^{2k-i}} \left( (x+x^{2^k})^{2^{k-i}}+ (\gamma^{(1)})^{2^{k-i}}(x+x^{2^k})^{2^{t_1+k-i}}
  +(\gamma^{(2)})^{2^{k-i}}(x+x^{2^k})^{2^{t_2+k-i}}
  \right)
  \end{array}
  \end{equation}
  Note that we have $\mathcal{L}(x)\in {\Bbb F}_{2^{k}} $ and $ (x+x^{2^k})^{2^{k-i}}+ (\gamma^{(1)})^{2^{k-i}}(x+x^{2^k})^{2^{t_1+k-i}}
  +(\gamma^{(2)})^{2^{k-i}}(x+x^{2^k})^{2^{t_2+k-i}}\in {\Bbb F}_{2^{k}}$ for any $x\in {\Bbb F}_{2^{2k}}$.
  In order to
show that $\mathcal{L}(x)+\mathcal{L}^*(x)$ is invertible, we need to show that
$\mathcal{L}(x)+\mathcal{L}^*(x)=0$ if and only if $x=0$.

 Since  both $ (\gamma^{(1)})^{2^{k-t_1}} z^{2^{k-t_1}-1}+(\gamma^{(2)})^{2^{k-t_2}}z^{2^{k-t_2}-1}+1=0$ and $ (\gamma^{(1)})^{2^{k-i}} z^{2^{t_1}-1}+(\gamma^{(2)})^{2^{k-i}} z^{2^{t_2}-1}+1=0$ have no solutions on ${\Bbb F}_{2^{k}}$, we have both $\mathcal{L}(x)=0$ and $  (x+x^{2^k})^{2^{k-i}}+ (\gamma^{(1)})^{2^{k-i}}(x+x^{2^k})^{2^{t_1+k-i}}
  +(\gamma^{(2)})^{2^{k-i}}(x+x^{2^k})^{2^{t_2+k-i}}=0$ if and only if
  \begin{equation}\label{equ bent32}
  \left\{\begin{array}{c}
\alpha x^{2^i}+(\alpha x^{2^i})^{2^k}=0,\\
  x+x^{2^k}=0.
  \end{array} \right.
  \end{equation}

 By using the proof of Theorem \ref{theo in POtt}, we have   $\mathrm{F}_\alpha^{i}(x) $ is bent if and only if $\alpha\notin {\Bbb F}_{2^{k}} $. \qed


\end{proof}
By the same process used to prove Theorem \ref{theorem 3}, one can get the following result.
\begin{theorem}\label{theo bentkla}
Let  $V = {\Bbb F}_{2^{2k}}$ and
$i,\rho$ be two nonnegative
integers such that $\rho\leq k$. Let $\gamma^{(j)}\in {\Bbb F}_{2^{k}}$ and $0\leq t_j\leq k$ be a nonnegative
integer, where $j=1,2,\cdots,\rho$. Assume that both equations  $ \sum\limits_{j=1}^{\rho}(\gamma^{(j)})^{2^{k-t_j}} z^{2^{k-t_j}-1}+1=0$ and $ \sum\limits_{j=1}^{\rho}(\gamma^{(j)})^{2^{k-i}} z^{2^{t_j}-1}+1=0$ have no solution in  ${\Bbb F}_{2^{k}}$.  Then, the mapping $\mathrm{F}_\alpha^{i} $ defined on $V$ by
  \begin{equation}\label{equa bent 41}
  \mathrm{F}_\alpha^{i}(x)=Tr_{1}^{2k}\left(\alpha G(x)\right)
  \end{equation}
  is bent if and only if $\alpha\notin {\Bbb F}_{2^{k}} $, where
  $G(x)= x^{2^i}\left(Tr^{2k}_k(x)+\sum\limits_{j=1}^{\rho}\gamma^{(j)}(Tr^{2k}_k(x))^{2^{t_j}}\right)$.

\end{theorem}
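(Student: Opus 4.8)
The plan is to follow verbatim the template of the proofs of Theorems~\ref{theo-new_bent} and \ref{theorem 3}, reducing the bentness of $\mathrm{F}_\alpha^i$ to the invertibility of a single linearized operator through Lemma~\ref{lemma adjoint}. First I would rewrite $\mathrm{F}_\alpha^i(x)=Tr_1^{2k}(x\mathcal{L}(x))$, using for each quadratic term the identity $Tr_1^{2k}(cx^{2^a+2^b})=Tr_1^{2k}\bigl(x\cdot c^{2^{2k-a}}x^{2^{b-a}}\bigr)$ to collect all contributions into one linear map. Since $G(x)=x^{2^i}\bigl(Tr^{2k}_k(x)+\sum_{j=1}^{\rho}\gamma^{(j)}(Tr^{2k}_k(x))^{2^{t_j}}\bigr)$ and $Tr^{2k}_k(x)=x+x^{2^k}$, exactly as in the proof of Theorem~\ref{theorem 3} this should yield
\[
\mathcal{L}(x)=\Bigl(\alpha x^{2^i}+(\alpha x^{2^i})^{2^k}\Bigr)+\sum_{j=1}^{\rho}(\gamma^{(j)})^{2^{k-t_j}}\Bigl(\alpha x^{2^i}+(\alpha x^{2^i})^{2^k}\Bigr)^{2^{k-t_j}}.
\]

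Next, invoking Lemma~\ref{lemma L} termwise, I would compute the adjoint
\[
\mathcal{L}^*(x)=\alpha^{2^{2k-i}}\Bigl((x+x^{2^k})^{2^{k-i}}+\sum_{j=1}^{\rho}(\gamma^{(j)})^{2^{k-i}}(x+x^{2^k})^{2^{t_j+k-i}}\Bigr)=:\alpha^{2^{2k-i}}B(x).
\]
The two structural facts that drive the argument are that, writing $y:=Tr^{2k}_k(\alpha x^{2^i})$ and $w:=Tr^{2k}_k(x)$, both lie in ${\Bbb F}_{2^k}$, so $\mathcal{L}(x)\in{\Bbb F}_{2^k}$ and $B(x)\in{\Bbb F}_{2^k}$, because $Tr^{2k}_k$ maps into ${\Bbb F}_{2^k}$ and Frobenius preserves ${\Bbb F}_{2^k}$. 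Assuming $\alpha\notin{\Bbb F}_{2^k}$ (hence $\alpha^{2^{2k-i}}\notin{\Bbb F}_{2^k}$), the equation $\mathcal{L}(x)+\mathcal{L}^*(x)=0$ gives $\alpha^{2^{2k-i}}B(x)=\mathcal{L}(x)\in{\Bbb F}_{2^k}$, which, since $B(x)\in{\Bbb F}_{2^k}$, is impossible unless $B(x)=0$, whence also $\mathcal{L}(x)=0$. This decoupling step is what reduces the singularity of $\mathcal{L}+\mathcal{L}^*$ to the simultaneous vanishing of $\mathcal{L}(x)$ and $B(x)$.

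Then I would bring in the two hypotheses. Factoring $\mathcal{L}(x)=y\bigl(1+\sum_j(\gamma^{(j)})^{2^{k-t_j}}y^{2^{k-t_j}-1}\bigr)$ shows that if $y\neq0$ then $z=y$ solves $\sum_j(\gamma^{(j)})^{2^{k-t_j}}z^{2^{k-t_j}-1}+1=0$, contradicting the first no-solution assumption, so $y=0$. Factoring $B(x)=\bigl(w(1+\sum_j\gamma^{(j)}w^{2^{t_j}-1})\bigr)^{2^{k-i}}$ via $(a+b)^{2^{k-i}}=a^{2^{k-i}}+b^{2^{k-i}}$ shows that a nonzero $w$ would make $z=w^{2^{k-i}}$ solve $\sum_j(\gamma^{(j)})^{2^{k-i}}z^{2^{t_j}-1}+1=0$, contradicting the second assumption, so $w=0$. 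Thus $\mathcal{L}(x)+\mathcal{L}^*(x)=0$ yields $Tr^{2k}_k(\alpha x^{2^i})=0$ and $x+x^{2^k}=0$, which is precisely the system~(\ref{equ bent32}); the argument in the proof of Theorem~\ref{theo in POtt} then forces $x=0$, so $\mathcal{L}+\mathcal{L}^*$ is invertible and $\mathrm{F}_\alpha^i$ is bent. Conversely, if $\alpha\in{\Bbb F}_{2^k}$, restricting to $x\in{\Bbb F}_{2^k}$ makes $w=0$ and $y=Tr^{2k}_k(\alpha x^{2^i})=0$, so $\mathcal{L}+\mathcal{L}^*$ vanishes on the nonzero subspace ${\Bbb F}_{2^k}$ and $\mathrm{F}_\alpha^i$ is not bent.

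The main obstacle I anticipate is purely the Frobenius-exponent bookkeeping: confirming the exact shapes of $\mathcal{L}$ and $\mathcal{L}^*$ (that every exponent reduction is taken modulo the correct period and that the $\alpha$-powers collapse cleanly into the single factor $\alpha^{2^{2k-i}}$), and checking that the factorization of $B(x)$ produces the stated equation $\sum_j(\gamma^{(j)})^{2^{k-i}}z^{2^{t_j}-1}+1=0$ rather than its unpowered variant. These two forms agree because $z\mapsto z^{2^{k-i}}$ permutes ${\Bbb F}_{2^k}$, so the property of having no solution transfers between them; verifying this, together with the membership claims $\mathcal{L}(x),B(x)\in{\Bbb F}_{2^k}$, is the only delicate point, after which the generalization from $\rho=2$ to arbitrary $\rho\le k$ is routine.
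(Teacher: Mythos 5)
Your proposal is correct and is essentially the paper's own argument: the paper proves this theorem by simply remarking that it follows ``by the same process used to prove Theorem~\ref{theorem 3}'', and your computation of $\mathcal{L}$, $\mathcal{L}^*$, the decoupling via $\mathcal{L}(x),B(x)\in{\Bbb F}_{2^k}$ and $\alpha^{2^{2k-i}}\notin{\Bbb F}_{2^k}$, and the reduction to the system $Tr^{2k}_k(\alpha x^{2^i})=0$, $x+x^{2^k}=0$ is exactly that process generalized from $\rho=2$ to arbitrary $\rho$. Your version is in fact more explicit than the paper's (notably the factorizations showing that a nonzero $y$ or $w$ would violate the no-solution hypotheses, and the Frobenius transfer between the powered and unpowered equations), so no gap remains.
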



\begin{lemma}\cite{Pott2017}\label{lemma vectorial bent}
 Let $F_\alpha(x) = Tr^{2k}_1(\alpha G(x))$, be a
Boolean bent function for any $\alpha \in {\Bbb F}_2^{2k}\setminus {\Bbb F}_2^{k}$, where
$G : {\Bbb F}_2^{2k}\rightarrow {\Bbb F}_2^{2k}$. Then, $F : {\Bbb F}_2^{2k}\rightarrow {\Bbb F}_2^{k} $,
defined as $F(x)=Tr^{2k}_k(\alpha G(x))$ is a vectorial bent function
for any $\alpha \in {\Bbb F}_2^{2k}\setminus {\Bbb F}_2^{k}$.
\end{lemma}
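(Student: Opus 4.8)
The plan is to show directly that every nonzero component of $F$ is bent, since this is precisely the definition of a vectorial bent function. The components of $F\colon\mathbb{F}_{2^{2k}}\to\mathbb{F}_{2^k}$ are the Boolean functions $x\mapsto Tr_1^{k}(\lambda F(x))$ with $\lambda\in\mathbb{F}_{2^k}^{\star}$, so I would fix such a $\lambda$ and analyse the corresponding component.

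The key step is a trace-transitivity identity that turns a component of $F$ into one of the given bent functions $F_\beta$. For $\lambda\in\mathbb{F}_{2^k}$ and any $y\in\mathbb{F}_{2^{2k}}$ one has $\lambda\,Tr_k^{2k}(y)=\lambda y+\lambda y^{2^k}=\lambda y+(\lambda y)^{2^k}=Tr_k^{2k}(\lambda y)$, the middle equality holding because $\lambda^{2^k}=\lambda$ for $\lambda\in\mathbb{F}_{2^k}$. Using this together with $Tr_1^{2k}=Tr_1^{k}\circ Tr_k^{2k}$, I would compute, for every $\lambda\in\mathbb{F}_{2^k}^{\star}$,
\begin{displaymath}
Tr_1^{k}\bigl(\lambda F(x)\bigr)=Tr_1^{k}\bigl(Tr_k^{2k}(\lambda\alpha\,G(x))\bigr)=Tr_1^{2k}\bigl(\lambda\alpha\,G(x)\bigr)=F_{\lambda\alpha}(x),
\end{displaymath}
so the component of $F$ indexed by $\lambda$ coincides exactly with $F_{\lambda\alpha}$ in the notation of the hypothesis.

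It then remains to check that the new coefficient $\lambda\alpha$ still lies in $\mathbb{F}_{2^{2k}}\setminus\mathbb{F}_{2^k}$, which is the only place the assumption $\lambda\neq0$ is used: if $\lambda\alpha\in\mathbb{F}_{2^k}$, then since $\lambda^{-1}\in\mathbb{F}_{2^k}$ we would get $\alpha=\lambda^{-1}(\lambda\alpha)\in\mathbb{F}_{2^k}$, contradicting $\alpha\notin\mathbb{F}_{2^k}$. Hence $\lambda\alpha\in\mathbb{F}_{2^{2k}}\setminus\mathbb{F}_{2^k}$, and the hypothesis guarantees that $F_{\lambda\alpha}$ is bent; thus every nonzero component of $F$ is bent and $F$ is vectorial bent.

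I do not expect a genuine obstacle here: the whole argument reduces to the observation that multiplication by a scalar $\lambda\in\mathbb{F}_{2^k}$ commutes with $Tr_k^{2k}$, after which transitivity of the trace does the work. The only point demanding (minor) care is the verification that $\lambda\alpha$ remains outside $\mathbb{F}_{2^k}$, so that the bentness hypothesis can legitimately be invoked with $\beta=\lambda\alpha$.
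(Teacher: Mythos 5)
Your proposal is correct. Note that the paper itself states this lemma as a citation from Pott et al.\ and gives no proof of its own, so there is nothing to compare against line by line; your argument --- identifying the component $Tr_1^{k}(\lambda F(x))$ with $F_{\lambda\alpha}$ via $\lambda\,Tr_k^{2k}(y)=Tr_k^{2k}(\lambda y)$ and trace transitivity, then checking $\lambda\alpha\notin\mathbb{F}_{2^k}$ --- is exactly the standard one and fully establishes the statement.
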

  According to Theorem \ref{theo bentkla} and Lemma \ref{lemma vectorial bent}, we immediately get the following theorem.
  \begin{theorem}\label{theo bentvector}
Let  $G(x)$ be defined as in Theorem \ref{theo bentkla}.
 Then, the mapping $\mathrm{F}_\alpha$ defined by
  \begin{equation}\label{equa bent 41}
  \mathrm{F}_\alpha(x)=Tr^{2k}_k\left(\alpha G(x)\right)
  \end{equation}
  is a vectorial bent function
for any $\alpha \in {\Bbb F}_{2^{2k}}\setminus {\Bbb F}_{2^k}$.
\end{theorem}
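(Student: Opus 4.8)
The plan is to deduce this theorem directly from the combination of Theorem~\ref{theo bentkla} and Lemma~\ref{lemma vectorial bent}, since the hypotheses of the former are exactly engineered to supply the hypothesis of the latter. First I would observe that, by Theorem~\ref{theo bentkla}, under the stated conditions on the two equations $\sum_{j=1}^{\rho}(\gamma^{(j)})^{2^{k-t_j}} z^{2^{k-t_j}-1}+1=0$ and $\sum_{j=1}^{\rho}(\gamma^{(j)})^{2^{k-i}} z^{2^{t_j}-1}+1=0$, the Boolean function $Tr_1^{2k}(\alpha G(x))$ is bent for every $\alpha\in{\Bbb F}_{2^{2k}}\setminus{\Bbb F}_{2^k}$, where $G$ is the fixed function of Theorem~\ref{theo bentkla}. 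This is precisely the hypothesis required to invoke Lemma~\ref{lemma vectorial bent} with the same $G$, which then asserts that $\mathrm{F}_\alpha(x)=Tr^{2k}_k(\alpha G(x))$ is a vectorial bent function for all such $\alpha$. This chaining immediately yields the conclusion.

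To make the argument self-contained I would also spell out why the conclusion of Lemma~\ref{lemma vectorial bent} holds, as it is the only substantive point. The components of $\mathrm{F}_\alpha$ are the Boolean functions $x\mapsto Tr_1^k(\lambda\,\mathrm{F}_\alpha(x))$ for $\lambda\in{\Bbb F}_{2^k}^\star$. By transitivity of the trace and since $\lambda\in{\Bbb F}_{2^k}$, one has $Tr_1^k\!\left(\lambda\,Tr^{2k}_k(\alpha G(x))\right)=Tr_1^{2k}(\lambda\alpha\,G(x))$. The key observation is that $\lambda\alpha\notin{\Bbb F}_{2^k}$: indeed, if $\lambda\alpha$ lay in ${\Bbb F}_{2^k}$, then $\alpha=\lambda^{-1}(\lambda\alpha)$ would also lie in ${\Bbb F}_{2^k}$ (using that $\lambda$ is invertible in the field ${\Bbb F}_{2^k}$), contradicting $\alpha\notin{\Bbb F}_{2^k}$. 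Hence each component equals $Tr_1^{2k}(\beta G(x))$ with $\beta=\lambda\alpha\notin{\Bbb F}_{2^k}$, which is bent by Theorem~\ref{theo bentkla}; therefore all nonzero components of $\mathrm{F}_\alpha$ are bent and $\mathrm{F}_\alpha$ is vectorial bent.

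Since every step is an application of an already-established result, I do not expect a genuine obstacle here; the only point requiring minor care is the reduction of the components via trace transitivity together with the stability of ${\Bbb F}_{2^{2k}}\setminus{\Bbb F}_{2^k}$ under multiplication by ${\Bbb F}_{2^k}^\star$, which is exactly the content encapsulated in Lemma~\ref{lemma vectorial bent}. Consequently the proof reduces to citing Theorem~\ref{theo bentkla} to verify the hypothesis of Lemma~\ref{lemma vectorial bent} and then applying that lemma.
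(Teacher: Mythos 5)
Your proposal is correct and matches the paper's argument exactly: the paper derives this theorem immediately from Theorem~\ref{theo bentkla} combined with Lemma~\ref{lemma vectorial bent}, precisely the chaining you describe. Your additional unpacking of why the lemma holds (trace transitivity and stability of ${\Bbb F}_{2^{2k}}\setminus{\Bbb F}_{2^k}$ under multiplication by ${\Bbb F}_{2^k}^\star$) is sound but is content the paper simply cites from \cite{Pott2017}.
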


  In \cite{Pott2017}, the authors presented the differential spectrum of the
functions $ G: {\Bbb F}_{2^{2k}}\rightarrow {\Bbb F}_{2^{2k}}$ defined by $G(x)=x^{2^i}(x+x^{2^k})$. Their result is given below.
\begin{lemma}\cite{Pott2017}\label{lemma diff}
   Let $i$ be a nonnegative integer such that $i<k$. The differential spectrum of the
functions   $G(x)=x^{2^i}(x+x^{2^k}), G: {\Bbb F}_{2^{2k}}\rightarrow {\Bbb F}_{2^{2k}} $, is given by,
   \begin{equation}\label{equ lem pott}
   \delta_{G}(a,b)\in \left\{
   \begin{array}{cl}
  \{0,2^k\} & ~~\text{if}~~a\in {\Bbb F}^*_{2^k},\\
  \{0,2^{\gcd(i,k)}\}& ~~\text{if}~~a\in {\Bbb F}_{2^{2k}}\setminus {\Bbb F}_{2^k}.
   \end{array}\right.
   \end{equation}
In particular, $\delta_{G}(a,b)=2^k$ only for $a\in {\Bbb F}^*_{2^k}$ and $b\in {\Bbb F}_{2^k}$.

\end{lemma}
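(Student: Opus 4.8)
The plan is to exploit that $G(x)=x^{2^i+1}+x^{2^i+2^k}$ has algebraic degree $2$ (a Dembowski--Ostrom polynomial), so that every derivative $D_aG(x):=G(x+a)+G(x)$ is an $\mathbb{F}_2$-\emph{affine} map in $x$. Consequently, for each fixed $a\neq0$, the equation $D_aG(x)=b$ is an affine linear system, and $\delta_G(a,b)$ is either $0$ or the cardinality of a single $\mathbb{F}_2$-linear kernel that depends only on $a$. The whole problem then reduces to computing the dimension of that kernel in the two regimes $a\in{\Bbb F}^*_{2^k}$ and $a\in{\Bbb F}_{2^{2k}}\setminus{\Bbb F}_{2^k}$.

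First I would expand the derivative explicitly, using $(x+a)^{2^j}=x^{2^j}+a^{2^j}$, to obtain
$$D_aG(x)=c\,x^{2^i}+a^{2^i}(x+x^{2^k})+a^{2^i}c,\qquad c:=a+a^{2^k}=Tr^{2k}_k(a)\in{\Bbb F}_{2^k}.$$
The map $L_a(x):=c\,x^{2^i}+a^{2^i}(x+x^{2^k})$ is $\mathbb{F}_2$-linear, so $D_aG(x)=b$ becomes $L_a(x)=b+a^{2^i}c$. Hence $\delta_G(a,b)\in\{0,|\ker L_a|\}$, and everything is governed by $\dim_{{\Bbb F}_2}\ker L_a$.

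For $a\in{\Bbb F}^*_{2^k}$ we have $c=a+a^{2^k}=0$, so $L_a(x)=a^{2^i}Tr^{2k}_k(x)$; since $a^{2^i}\neq0$, its kernel is $\{x:x=x^{2^k}\}={\Bbb F}_{2^k}$, of dimension $k$, giving $\delta_G(a,b)\in\{0,2^k\}$. For the ``in particular'' part I would note that the image of $L_a$ is $a^{2^i}{\Bbb F}_{2^k}={\Bbb F}_{2^k}$ while $b+a^{2^i}c=b$, so the value $2^k$ is attained precisely when $b\in{\Bbb F}_{2^k}$. The harder case is $a\in{\Bbb F}_{2^{2k}}\setminus{\Bbb F}_{2^k}$, where $c\neq0$: here I set $v:=x+x^{2^k}=Tr^{2k}_k(x)\in{\Bbb F}_{2^k}$ and read the kernel relation $c\,x^{2^i}=a^{2^i}v$ as $x=(a/c^{2^{-i}})\,v^{2^{-i}}$ (legitimate since $c\neq0$ and $y\mapsto y^{2^i}$ is a bijection of ${\Bbb F}_{2^k}$). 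Substituting this back into $v=x+x^{2^k}$ and using $c^{2^k}=c$, $v^{2^k}=v$ collapses the system to the single scalar constraint $v^{2^i}=c^{2^i-1}v$ in ${\Bbb F}_{2^k}$. The admissible $v$ are $v=0$ together with the $w=v/c\in{\Bbb F}^*_{2^k}$ satisfying $w^{2^i-1}=1$; as ${\Bbb F}^*_{2^k}$ is cyclic of order $2^k-1$, this count is $\gcd(2^i-1,2^k-1)=2^{\gcd(i,k)}-1$, so there are $2^{\gcd(i,k)}$ admissible $v$ in all. Since $x\mapsto v$ is a bijection between $\ker L_a$ and the admissible $v$'s, $|\ker L_a|=2^{\gcd(i,k)}$ and $\delta_G(a,b)\in\{0,2^{\gcd(i,k)}\}$; as $\gcd(i,k)<k$ when $0<i<k$, the value $2^k$ never arises in this regime, which completes the ``in particular'' claim.

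The step I expect to be the main obstacle is this last reduction: one must verify carefully that feeding $x=(a/c^{2^{-i}})v^{2^{-i}}$ back into $v=x+x^{2^k}$ really yields the clean power equation $v^{2^i}=c^{2^i-1}v$ (which rests on $c,v\in{\Bbb F}_{2^k}$ being fixed by the $2^k$-Frobenius), and that the correspondence $x\leftrightarrow v$ is a genuine bijection, so that counting admissible $v$ truly counts kernel elements. The arithmetic identity $\gcd(2^i-1,2^k-1)=2^{\gcd(i,k)}-1$ is standard and supplies the final tally.
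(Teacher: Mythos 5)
Your argument is correct. Note, however, that the paper you are being compared against does not prove this statement at all: Lemma~\ref{lemma diff} is simply quoted from \cite{Pott2017}, so there is no in-paper proof to diverge from. Your route --- observing that $G$ is quadratic so that $D_aG(x)=c\,x^{2^i}+a^{2^i}Tr^{2k}_k(x)+a^{2^i}c$ with $c=Tr^{2k}_k(a)$ is affine, and then computing $\dim\ker L_a$ in the two regimes (kernel $=\mathbb{F}_{2^k}$ when $c=0$; reduction to $v^{2^i}=c^{2^i-1}v$ and the count $\gcd(2^i-1,2^k-1)=2^{\gcd(i,k)}-1$ when $c\neq 0$) --- is the standard derivative-kernel computation and matches how the cited source establishes the result. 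Your parenthetical restriction to $0<i<k$ in the final ``in particular'' step is in fact necessary: for $i=0$ one has $\gcd(i,k)=k$ and the kernel in the second regime also has size $2^k$, so the last sentence of the lemma as stated only holds for $0<i<k$; flagging that is a point in your favour rather than a gap.
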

Now we are going to show that the differential  spectrum of the
functions $x^{2^i}(x+x^{2^k}+x^{2^{t_1}}+x^{2^{t_1+k}}+x^{2^{t_2}}+x^{2^{t_2+k}}) $ is different from the one of the functions $x\mapsto x^{2^i}(x+x^{2^k}) $.
{
\begin{theorem}
 Let  $ \mathrm{F}_\alpha^{i}(x)=Tr^{2k}_1\left(\alpha G(x)\right)$
  be defined as Theorem \ref{theo-new_bent}, where $G(x)= x^{2^i}(x+x^{2^k}+x^{2^{t_1}}+x^{2^{t_1+k}}+x^{2^{t_2}}+x^{2^{t_2+k}})$.
   If 
   there exists
   $t_1=1$ and
 $gcd(t_2,k)\neq 1$ such that both $ z^{2^{k-t_1}-1}+z^{2^{k-t_2}-1}+1=0$ and $ z^{2^{t_1}-1}+ z^{2^{t_2}-1}+1=0$ have no solutions on ${\Bbb F}_{2^{k}}$, then for $i=t_2$, there exist elements $a\in {\Bbb F}_{2^{2k}}\setminus {\Bbb F}_{2^k}$ such that  the number
  $\delta_G(a,b)$ is equal to $2$ for any $b\in {\Bbb F}_{2^{2k}}$, which is neither   $2^{\gcd(i,k)}$  nor $ 2^k$.
\end{theorem}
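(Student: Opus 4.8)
The plan is to reduce the computation of $\delta_G(a,b)$ to the analysis of a single ${\Bbb F}_2$-linear map and then exhibit one explicit value of $a$ at which its kernel has size exactly $2$. First I would write $G(x)=x^{2^i}L(x)$ where $L(x)=x+x^{2^k}+x^{2^{t_1}}+x^{2^{t_1+k}}+x^{2^{t_2}}+x^{2^{t_2+k}}$ is ${\Bbb F}_2$-linear, and observe that $L(x)=P(Tr^{2k}_k(x))$ with $P(y)=y+y^{2^{t_1}}+y^{2^{t_2}}$; in particular $L(x)\in{\Bbb F}_{2^{k}}$ for all $x$, and the hypothesis that $z^{2^{t_1}-1}+z^{2^{t_2}-1}+1=0$ has no root in ${\Bbb F}_{2^{k}}$ says exactly that $P$ is a bijection of ${\Bbb F}_{2^{k}}$. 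Since $L$ and $x\mapsto x^{2^i}$ are additive, expanding $G(x+a)+G(x)$ gives $G(x+a)+G(x)=M_a(x)+a^{2^i}L(a)$ with $M_a(x):=L(a)x^{2^i}+a^{2^i}L(x)$ an ${\Bbb F}_2$-linear map. Hence $\delta_G(a,b)$ is either $0$ or $|\ker M_a|$, and it suffices to find $a\notin{\Bbb F}_{2^{k}}$ with $|\ker M_a|=2$.

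Next I would describe $\ker M_a$ completely. Writing $y_a=Tr^{2k}_k(a)$ and $y_x=Tr^{2k}_k(x)$, note that $a\notin{\Bbb F}_{2^{k}}$ means $y_a\neq 0$, so $L(a)=P(y_a)\neq 0$. Applying $Tr^{2k}_k$ to $M_a(x)=0$ and pulling the ${\Bbb F}_{2^{k}}$-scalars $P(y_a),P(y_x)$ through the trace yields $P(y_a)y_x^{2^i}+P(y_x)y_a^{2^i}=0$; with $i=t_2$ this gives $P(y_x)/P(y_a)=(y_x/y_a)^{2^{t_2}}$, and substituting back into $M_a(x)=0$ forces $x=a\,(y_x/y_a)$. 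Thus every kernel element lies on the line $a\,{\Bbb F}_{2^{k}}$, i.e. $\ker M_a=a\,V_c$ where $V_c=\{s\in{\Bbb F}_{2^{k}}:\Phi_c(s)=0\}$ and $c=y_a$; a direct computation of $M_a(as)=a^{2^i}\Phi_c(s)$ (using $i=t_2$ and $t_1=1$ to cancel the matching terms) produces the ${\Bbb F}_2$-linear map $\Phi_c(s)=c(s^{2^{t_2}}+s)+c^2(s^{2^{t_2}}+s^2)$.

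Finally I would choose $a$ with $Tr^{2k}_k(a)=1$; such $a$ exists and is automatically outside ${\Bbb F}_{2^{k}}$ since $1\neq 0$. For $c=1$ the map collapses to $\Phi_1(s)=s+s^2=s(1+s)$, whose only roots in ${\Bbb F}_{2^{k}}$ are $0$ and $1$, so $V_1=\{0,1\}$ and $\ker M_a=\{0,a\}$ has size $2$. Therefore $\delta_G(a,b)=2$ for every $b$ in the coset $a^{2^i}L(a)+\mathrm{Im}\,M_a$. Since $\gcd(t_2,k)\neq 1$ forces $\gcd(i,k)=\gcd(t_2,k)\geq 2$ and $k\geq 2$, both $2^{\gcd(i,k)}$ and $2^k$ are at least $4$, so the value $2$ is distinct from each; comparing with Lemma \ref{lemma diff} this shows that the differential spectrum of $G$ differs from that of $x^{2^i}(x+x^{2^k})$.

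The main obstacle is the second step: proving that $\ker M_a$ is contained in the line $a\,{\Bbb F}_{2^{k}}$, rather than merely the easy inclusion $a\,V_c\subseteq\ker M_a$. The reverse inclusion, that there are no ``off-line'' kernel elements, is exactly where the argument uses the bijectivity of $P$ (to legitimately divide by $P(y_a)$ and pass to $P(y_x)/P(y_a)$) and the choice $i=t_2$ (to turn the trace identity into the clean relation $P(y_x)/P(y_a)=(y_x/y_a)^{2^{t_2}}$, which then yields $x=a(y_x/y_a)$). Once this is in place, the special value $c=1$ together with $t_1=1$ does the rest; I would also note that for general $t_1$ the same value $c=1$ gives $\dim V_1=\gcd(t_1,k)$, so the hypothesis $t_1=1$ is precisely what pins the kernel size down to $2$.
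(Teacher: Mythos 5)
Your proposal is correct and follows essentially the same route as the paper: both arguments show that the difference of two solutions of $G(x)+G(x+a)=b$ must lie in $a\,\mathbb{F}_{2^k}$ (because $L(x)=P(Tr^{2k}_k(x))\in\mathbb{F}_{2^k}$), reduce to a linearized equation in the $\mathbb{F}_{2^k}$-parameter $\nu$ (your $\Phi_c$; the paper's equation in $\nu$ with $\tau=a+a^{2^k}$), and then use $i=t_2$, $t_1=1$ and $Tr^{2k}_k(a)=1$ to pin the solution set down to $\{0,1\}$, giving $\delta_G(a,b)\in\{0,2\}$. Your packaging via the kernel of the additive map $M_a$ is a cleaner presentation of the same computation, and it also makes explicit the $\delta_G(a,b)=0$ case that the paper only mentions in passing.
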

\begin{proof}
Let $a\in {\Bbb F}_{2^{2k}}\setminus {\Bbb F}_{2^k}$ such that  $\tau^{2^{t_1}}=\tau$, where  $\tau=a+a^{2^k}\neq 0$ (since $t_1|k$).
We have
\begin{equation}\label{equ bent51}
 \begin{array}{rl}
G(x)+G(x+a)=&a^{2^i}\left(x+x^{2^k}+(x+x^{2^k})^{2^{t_1}}+(x+x^{2^k})^{2^{t_2}}\right)\\
&+(x+a)^{2^i}\left(a+a^{2^k}+(a+a^{2^k})^{2^{t_1}}+(a+a^{2^k})^{2^{t_2}}\right).
  \end{array}
  \end{equation}
   Thus, for any $x'\in {\Bbb F}_{2^{2k}}$, there must be one element $b\in {\Bbb F}_{2^{2k}}$ such that $ G(x')+G(x'+a)=b$.


  Let $x',x''$ are the solutions of $G(x)+G(x+a)=b $. Hence
    \begin{equation}\label{equ bent52}
 \begin{array}{rl}
&G(x')+G(x'+a)+G(x'')+G(x''+a)\\
=&a^{2^i}\left(x'+x'^{2^k}+(x'+x'^{2^k})^{2^{t_1}}+(x'+x'^{2^k})^{2^{t_2}}
+x''+x''^{2^k}\right.\\
&\left.+(x''+x''^{2^k})^{2^{t_1}}+(x''+x''^{2^k})^{2^{t_2}}\right)\\
&+(x'+x'')^{2^i}\left(a+a^{2^k}+(a+a^{2^k})^{2^{t_1}}+(a+a^{2^k})^{2^{t_2}}\right)=0.
  \end{array}
  \end{equation}
  Since $x+x^{2^k}\in {\Bbb F}_{2^k}$ for any $x\in {\Bbb F}_{2^{2k}}$, (\ref{equ bent52}) implies that $(x'+x'')^{2^i}\left(a+a^{2^k}+(a+a^{2^k})^{2^{t_1}}\right.$ $\left.+(a+a^{2^k})^{2^{t_2}}\right)$ belongs to the
 multiplicative coset $a^{2^i}{\Bbb F}^*_{2^k}$.  Thus, we necessarily have $x'+x''=a\nu$,
where $\nu\in {\Bbb F}^*_{2^k}$.
Further, $x'+x''+(x'+x'')^{2^k}=a\nu+a^{2^k}\nu$. Since $t_1|t_2$, from (\ref{equ bent52}), we have
\begin{equation}\label{equ bent53}
\begin{array}{rl}
&\left(\tau+\tau^{2^{t_1}}+\tau^{2^{t_2}}\right) \nu^{2^i}
+\tau \nu+\tau^{2^{t_1}}\nu^{2^{t_1}}+\tau^{2^{t_2}}\nu^{2^{t_2}}\\
=&\tau \left(\nu^{2^{i}}
+ \nu+ \nu^{2^{t_1}}+ \nu^{2^{t_2}}\right)=0.
\end{array}
\end{equation}
If we set $i=t_2$, then from (\ref{equ bent53}) we have $\delta_G(a,b)=2\neq 2^{\gcd(i,k)}$ since $\gcd(i,k)=gcd(t_2,k)\neq 1$.

For one element $b\in {\Bbb F}_{2^{2k}}$, if for any $x'\in {\Bbb F}_{2^{2k}}$,  we always have $ G(x')+G(x'+a)\neq b$, then $\delta_G(a,b)=0$.

  \qed
\end{proof}

}

\begin{theorem}
Let 
$i,\rho$ be two nonnegative
integers such that $\rho\leq k$. Let $0\leq t_j\leq k$ be a nonnegative
integer, where $j=1,2,\cdots,\rho$. Assume that both $ \sum\limits_{j=1}^{\rho}z^{2^{k-t_j}-1}+1=0$ and $ \sum\limits_{j=1}^{\rho} z^{2^{t_j}-1}+1=0$ have no solution in  ${\Bbb F}_{2^{k}}$.  Then, the mapping $\mathrm{F}_\alpha^{i} $ defined by
  \begin{equation}
  \mathrm{F}_\alpha^{i}(x)=Tr_{1}^{2k}\left(\alpha G(x)\right)
  \end{equation}
  where
  $G(x)= x^{2^i}\left(Tr^{2k}_k(x)+\sum\limits_{j=1}^{\rho}(Tr^{2k}_k(x))^{2^{t_j}}\right)$
  is bent if and only if $\alpha\notin {\Bbb F}_{2^{k}} $. 
 Further, if the number of the solutions of $ \sum\limits_{j=1}^{\rho} z^{2^{t_j}}+z+z^{2^i}=0$  on ${\Bbb F}_{2^{k}}$ is not equal to $2^{\gcd(i,k)}$, then there exist elements $a\in {\Bbb F}_{2^{2k}}\setminus {\Bbb F}_{2^k}$ such that  the number
  $\delta_G(a,b)$ does not equal  $2^{\gcd(i,k)}$ for any $b\in {\Bbb F}_{2^{2k}}$.

\end{theorem}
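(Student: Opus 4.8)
The first assertion (bentness) requires no new argument: it is exactly Theorem~\ref{theo bentkla} specialized to $\gamma^{(1)}=\cdots=\gamma^{(\rho)}=1$. Substituting $\gamma^{(j)}=1$ turns the two auxiliary equations of Theorem~\ref{theo bentkla} into $\sum_{j=1}^{\rho}z^{2^{k-t_j}-1}+1=0$ and $\sum_{j=1}^{\rho}z^{2^{t_j}-1}+1=0$, which are precisely the hypotheses here. Hence $\mathrm{F}_\alpha^{i}$ is bent if and only if $\alpha\notin{\Bbb F}_{2^{k}}$, and nothing further is needed for this part.

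For the differential claim, I would write $T(x)=Tr^{2k}_k(x)=x+x^{2^k}$ and $S(x)=T(x)+\sum_{j=1}^{\rho}T(x)^{2^{t_j}}$, so that $G(x)=x^{2^i}S(x)$. Both $x\mapsto x^{2^i}$ and $S$ are ${\Bbb F}_2$-linear (linearized) maps, and $S(x)\in{\Bbb F}_{2^k}$ for every $x$. First I would compute the derivative in direction $a$: using $(x+a)^{2^i}=x^{2^i}+a^{2^i}$ and $S(x+a)=S(x)+S(a)$, the two copies of $x^{2^i}S(x)$ cancel and one obtains
\[
D_aG(x):=G(x+a)+G(x)=\underbrace{x^{2^i}S(a)+a^{2^i}S(x)}_{=:L_a(x)}+a^{2^i}S(a).
\]
Thus $D_aG$ is affine with ${\Bbb F}_2$-linear part $L_a$, and consequently $\delta_G(a,b)\in\{0,\,|\ker L_a|\}$ for every $b$. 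The whole problem therefore reduces to computing $|\ker L_a|$ for a single well-chosen $a\in{\Bbb F}_{2^{2k}}\setminus{\Bbb F}_{2^k}$.

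Since $T$ is surjective onto ${\Bbb F}_{2^k}$, I would pick $a$ with $T(a)=1$ (then automatically $a\notin{\Bbb F}_{2^k}$, as $T$ vanishes on ${\Bbb F}_{2^k}$). Testing $z=1$ in the hypothesis $\sum_j z^{2^{t_j}-1}+1=0$ forces $\rho+1\neq0$, i.e. $\rho$ is even, whence $S(a)=1+\rho\cdot1=1\neq0$. The key structural step is to solve $L_a(y)=0$, i.e. $y^{2^i}S(a)=a^{2^i}S(y)$. If $S(y)=0$ then $y^{2^i}S(a)=0$ forces $y=0$; otherwise $S(a),S(y)\in{\Bbb F}_{2^k}^{*}$ give $(y/a)^{2^i}\in{\Bbb F}_{2^k}$, hence $y=a\nu$ with $\nu\in{\Bbb F}_{2^k}^{*}$ (the Frobenius $z\mapsto z^{2^i}$ maps ${\Bbb F}_{2^k}$ bijectively onto itself). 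For such $y$, since $\nu\in{\Bbb F}_{2^k}$ we have $T(a\nu)=\nu\,T(a)=\nu$, so $S(a\nu)=\nu+\sum_j\nu^{2^{t_j}}$ and therefore $L_a(a\nu)=a^{2^i}\big(\nu^{2^i}+\nu+\sum_{j}\nu^{2^{t_j}}\big)$. Hence $y\mapsto\nu=y/a$ is a bijection between $\ker L_a$ and the ${\Bbb F}_2$-subspace $\{\nu\in{\Bbb F}_{2^k}:\sum_{j}\nu^{2^{t_j}}+\nu+\nu^{2^i}=0\}$, giving $|\ker L_a|=N$, where $N$ is the number of solutions in ${\Bbb F}_{2^k}$ of $\sum_{j}z^{2^{t_j}}+z+z^{2^i}=0$.

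Finally I would conclude: for this $a$ one has $\delta_G(a,b)\in\{0,N\}$ for every $b\in{\Bbb F}_{2^{2k}}$; since $N\neq2^{\gcd(i,k)}$ by hypothesis and $0\neq2^{\gcd(i,k)}$, the value $\delta_G(a,b)$ never equals $2^{\gcd(i,k)}$, which is the claim. I expect the main obstacle to be the bijection step: proving that every nonzero element of $\ker L_a$ has the form $a\nu$ with $\nu\in{\Bbb F}_{2^k}^{*}$ (the case split $S(y)=0$ versus $S(y)\neq0$, together with the invariance of ${\Bbb F}_{2^k}$ under Frobenius), and then verifying that the reduced equation matches $\sum_j z^{2^{t_j}}+z+z^{2^i}=0$ \emph{exactly}. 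It is this exact matching that dictates the normalization $T(a)=1$ and, through the test value $z=1$, reveals that the hypothesis silently forces $\rho$ to be even and $S(a)=1$.
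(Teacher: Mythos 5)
Your proposal is correct and follows essentially the same route as the paper: the bentness part is obtained by specializing Theorem~\ref{theo bentkla} to $\gamma^{(j)}=1$, and the differential part uses the same choice of $a$ with $Tr^{2k}_k(a)=1$, the same observation that $z=1$ not being a root forces $S(a)=1$, the same coset argument giving $x'+x''=a\nu$ with $\nu\in{\Bbb F}_{2^k}^*$, and the same reduction to $\nu^{2^i}+\nu+\sum_j\nu^{2^{t_j}}=0$. Your packaging via the kernel of the linear part $L_a$ of the derivative is a slightly cleaner way to organize the identical computation (it yields $\delta_G(a,b)\in\{0,N\}$ directly rather than arguing by contradiction), but it is not a different method.
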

\begin{proof}
From Theorem \ref{theo bentkla}, we know $ \mathrm{F}_\alpha^{i}(x) $  is bent if and only if $\alpha\notin {\Bbb F}_{2^{k}} $.
We have
\begin{equation}\label{equ bent61}
 \begin{array}{rl}
G(x)+G(x+a)=&a^{2^i}\left(Tr^{2k}_k(x)+\sum\limits_{j=1}^{\rho}(Tr^{2k}_k(x))^{2^{t_j}}\right)\\
&+(x+a)^{2^i}\left(Tr^{2k}_k(a)+\sum\limits_{j=1}^{\rho}(Tr^{2k}_k(a))^{2^{t_j}}\right)=b.
  \end{array}
  \end{equation}

     Let $a\in {\Bbb F}_{2^{2k}}\setminus {\Bbb F}_{2^k}$ such that    $a+a^{2^k}=1 $.  We need to show the number of  solutions of $ G(x)+G(x+a)=b$ is not equal to $ 2^{\gcd(i,k)}$ for any
$b\in {\Bbb F}_{2^{2k}}$.  Let $\rho=\gcd(i,k)$.
    We suppose  $\delta_G(a,b)=2^{\rho}$ and let $x',x''$ are the solutions of (\ref{equ bent61}) for some $b$. Hence
    \begin{equation}\label{equ bent62}
 \begin{array}{rl}
&G(x')+G(x'+a)+G(x'')+G(x''+a)\\
=&a^{2^i}\left(Tr^{2k}_k(x'+x'')+\sum\limits_{j=1}^{\rho}(Tr^{2k}_k(x'+x''))^{2^{t_j}}\right)+(x'+x'')^{2^i}=0
  \end{array}
  \end{equation}
  since $ \sum\limits_{j=1}^{\rho} z^{2^{t_j}-1}+1=0$ have no solution in  ${\Bbb F}_{2^{k}} $, that is, $ \left(Tr^{2k}_k(a)+\sum\limits_{j=1}^{\rho}(Tr^{2k}_k(a))^{2^{t_j}}\right)=1$.
  For any $x\in {\Bbb F}_{2^{2k}}$, (\ref{equ bent62}) implies that $(x'+x'')^{2^i}$ belongs to the
 multiplicative coset $a^{2^i}{\Bbb F}^*_{2^k}$.  Thus, we necessarily have $x'+x''=a\nu$,
where $\nu\in {\Bbb F}^*_{2^k}$.
Further, $Tr^{2k}_k(x'+x'')=x'+x''+(x'+x'')^{2^k}=a\nu+a^{2^k}\nu$.  From (\ref{equ bent62}), we have
\begin{equation}\label{equ bent63}
 \nu^{2^i}
+\nu+\sum\limits_{j=1}^{\rho}\nu^{2^{t_j}}=0.
\end{equation}
We also know that the number of the solutions of $ \sum\limits_{j=1}^{\rho} z^{2^{t_j}}+z+z^{2^i}=0$  on ${\Bbb F}_{2^{k}}$ is not equal to $2^{\gcd(i,k)}$, thus, if  $a\in \{ x|Tr^{2k}_k(x)=1,x\in {\Bbb F}_{2^{2k}}\}$,  the number
  $\delta_G(a,b)$ is not equals  $2^{\gcd(i,k)}$ for any $b\in {\Bbb F}_{2^{2k}}$
  \qed
\end{proof}

\begin{theorem}\label{theo genealized}
Let $n=2k,e$ be two positive integers. Let $V = {\Bbb F}_{2^{2k}}$ and $i$ be a nonnegative
integer. Let $E=\{x|x\in {\Bbb F}_{2^{2k}}, Tr_k^{2k}(x)\in {\Bbb F}_{2^{e}} \} $ and
 $O=\{x\in {\Bbb F}_{2^{2k}}, Tr_k^{2k}(x)\in M \} $, where  $M=\{y+Tr^k_e(y)| y\in{\Bbb F}_{2^{k}}\}$.
 Let $\mathrm{F}_\alpha^{i} $ be the function defined on $V$ by
  \begin{equation}\label{equa bent70}
  \mathrm{F}_\alpha^{i}(x)=Tr_{1}^{2k}\left(\alpha x^{2^i}Tr^{2k}_e(x)\right).
  \end{equation}
  If $\frac{k}{e}$ is even, then $\mathrm{F}_\alpha^{i} $
  is bent if and only if $\alpha\notin E$.
    If $\frac{k}{e}$ is odd, then $\mathrm{F}_\alpha^{i} $
  is bent if and only if $\alpha\notin O$.
   Further,
  if $k$ is odd and $e=2$, then $\mathrm{F}_\alpha^{i} $
  is bent if and only if $\alpha\notin O$.
\end{theorem}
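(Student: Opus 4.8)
The plan is to follow the adjoint-operator strategy already used for Theorems \ref{theo in POtt}, \ref{theo-new_bent} and \ref{theo bentkla}. First I would write $\mathrm{F}_\alpha^{i}(x)=Tr_{1}^{2k}(x\,\mathcal{L}(x))$ for an explicit $\mathbb{F}_2$-linear map $\mathcal{L}$. Expanding $Tr^{2k}_{e}(x)=\sum_{j=0}^{2k/e-1}x^{2^{je}}$ and folding each summand $Tr_{1}^{2k}(\alpha x^{2^i+2^{je}})$ via the Frobenius invariance of the absolute trace, one obtains $\mathcal{L}(x)=(\alpha\,Tr^{2k}_{e}(x))^{2^{2k-i}}$, and by Lemma \ref{lemma L} its adjoint is $\mathcal{L}^{*}(x)=Tr^{2k}_{e}(\alpha x^{2^i})$. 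Lemma \ref{lemma adjoint} then reduces the whole statement to deciding when $\mathcal{L}+\mathcal{L}^{*}$ is invertible, i.e.\ when the equation $Tr^{2k}_{e}(\alpha x^{2^i})=(\alpha\,Tr^{2k}_{e}(x))^{2^{2k-i}}$ forces $x=0$.

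Second, I would rewrite this kernel equation in terms of the two quantities from which the sets $E$ and $O$ are built, namely $u=Tr^{2k}_{k}(x)\in\mathbb{F}_{2^{k}}$ and $\beta=Tr^{2k}_{k}(\alpha)\in\mathbb{F}_{2^{k}}$. When $e\mid k$, transitivity of the trace gives $Tr^{2k}_{e}=Tr^{k}_{e}\circ Tr^{2k}_{k}$, so $Tr^{2k}_{e}(x)=Tr^{k}_{e}(u)$ and, writing $x^{2^{i+k}}=u^{2^i}+x^{2^i}$, one gets $Tr^{2k}_{e}(\alpha x^{2^i})=Tr^{k}_{e}(\beta x^{2^i}+\alpha^{2^k}u^{2^i})$. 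Raising the kernel equation to the power $2^{i}$ and simplifying then collapses it to a single identity inside $\mathbb{F}_{2^{k}}$ linking $Tr^{k}_{e}(u)$, $\beta$ and $u$, and controlling the nontrivial solvability of this identity is exactly what the proof must accomplish.

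Third comes the case analysis on the parity of $k/e$, which I expect to be the main obstacle, since here one cannot simply invoke Theorem \ref{theo bentkla}: with all coefficients equal to $1$ and shifts $t_j=je$, the two auxiliary equations of that theorem do admit solutions, so the degenerate borderline has to be treated by hand. The decisive elementary fact is that $Tr^{k}_{e}$ acts on the subfield $\mathbb{F}_{2^{e}}$ as multiplication by $k/e \bmod 2$. Hence for $k/e$ even this restriction is the zero map and the obstruction to injectivity collapses precisely to $\beta\in\mathbb{F}_{2^{e}}$, that is to $\alpha\in E$; for $k/e$ odd the restriction is the identity and the obstruction becomes $\beta\in M=\{y+Tr^{k}_{e}(y):y\in\mathbb{F}_{2^{k}}\}$, that is $\alpha\in O$, because $M$ is exactly the image of the singular map $y\mapsto y+Tr^{k}_{e}(y)$ whose kernel is $\mathbb{F}_{2^{e}}$. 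In each case I would show that $\mathcal{L}+\mathcal{L}^{*}$, being an endomorphism of a finite space, is injective (equivalently bijective) if and only if $\beta$ avoids the corresponding set, and then conclude bentness exactly as in the final step of the proof of Theorem \ref{theo in POtt}.

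Finally, for the remaining assertion ($k$ odd and $e=2$) the hypothesis $e\mid k$ fails, so the transitivity shortcut is unavailable; here I would argue directly from $\gcd(e,k)=\gcd(2,k)=1$, which forces $2k/e=k$ to be odd and places the problem in the same regime as the $k/e$-odd case, yielding once more that $\mathrm{F}_\alpha^{i}$ is bent precisely when $\alpha\notin O$. The delicate point throughout is to verify that no spurious nonzero vector survives in the kernel of $\mathcal{L}+\mathcal{L}^{*}$, i.e.\ that the only degeneracies are exactly those detected by $E$ and $O$; this is where the parity of $k/e$ and the action of $Tr^{k}_{e}$ on $\mathbb{F}_{2^{e}}$ must be used with care.
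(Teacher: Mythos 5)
Your plan follows the same route as the paper's own proof: decompose $\mathrm{F}_\alpha^{i}(x)=Tr_1^{2k}\bigl(x\,\mathcal{L}(x)\bigr)$ with $\mathcal{L}(x)=Tr^{2k}_e(\alpha x^{2^i})$ and $\mathcal{L}^*(x)=\alpha^{2^{2k-i}}\bigl(Tr^{2k}_e(x)\bigr)^{2^{2k-i}}$ (you label the two factors the other way round, which is harmless since only the sum matters), then invoke Lemmas \ref{lemma adjoint} and \ref{lemma L} to reduce bentness to the invertibility of $\mathcal{L}+\mathcal{L}^*$. The gap sits exactly at the point you flag as delicate: the claim that the only obstruction to injectivity is $Tr^{2k}_k(\alpha)\in\mathbb{F}_{2^e}$ (resp.\ $\in M$) cannot be established, because it is false whenever $e<k$. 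Both summands of $\mathcal{L}+\mathcal{L}^*$ vanish simultaneously on
\begin{equation*}
W_\alpha=\left\{x\in\mathbb{F}_{2^{2k}}\;:\;Tr^{2k}_e(x)=0\ \text{ and }\ Tr^{2k}_e\bigl(\alpha x^{2^i}\bigr)=0\right\},
\end{equation*}
which is the intersection of two $\mathbb{F}_2$-subspaces of codimension $e$ and hence has dimension at least $2k-2e$. For $e<k$ this is positive for every $\alpha$ and every $i$, so $\ker(\mathcal{L}+\mathcal{L}^*)\supseteq W_\alpha\neq\{0\}$ and $\mathrm{F}_\alpha^{i}$ is never bent. (Sanity check for $e=1$: $\mathrm{F}_\alpha^{i}(x)=Tr_1^{2k}(x)\cdot Tr_1^{2k}(\alpha x^{2^i})$ is a product of two linear forms, whose associated bilinear form has rank at most $2$.) No parity analysis of $k/e$, and no separate treatment of the case $k$ odd, $e=2$, can repair this; the reduction only closes when $e=k$, which is Theorem \ref{theo in POtt}.

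You should not try to patch this by imitating the paper, because its argument breaks at the same spot: it identifies $\ker Tr^{2k}_e$ with $E$ (resp.\ $O$) --- for $E$ this already fails on cardinality grounds, $2^{2k-e}$ versus $2^{k+e}$, unless $k=2e$ --- and then deduces from the system $Tr^{2k}_e(\alpha x^{2^i})=Tr^{2k}_e(x)=0$ that $x=0$, which is precisely the step defeated by $W_\alpha$. The reason Theorems \ref{theo-new_bent}, \ref{theorem 3} and \ref{theo bentkla} do go through is that their hypotheses (the auxiliary equations having no root in $\mathbb{F}_{2^k}$) are engineered to make the two kernel conditions intersect trivially; for the plain linearized polynomial $Tr^{2k}_e$ with $e<k$ a trivial intersection is impossible for dimension reasons. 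A correct version of your plan would need an added hypothesis forcing $W_\alpha=\{0\}$, and for $e<k$ no such hypothesis can hold.
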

  \begin{proof}
We have
\begin{equation}\label{equa bent71}
\begin{array}{rl}
  \mathrm{F}_\alpha^{i}(x)=&Tr_{1}^{2k}\left(\alpha x^{2^i}(x+x^{2^e}+x^{2^{2e}}+\cdots+x^{2^{2k-e}})\right)\\
=&Tr_{1}^{2k}(x\alpha x^{2^i})+Tr_{1}^{2k}(x\alpha^{2^{2k-e}} x^{2^{i+2k-e}})+Tr_{1}^{2k}(x\alpha^{2^{2k-2e}} x^{2^{i+2k-2e}})\\
&+\cdots +Tr_{1}^{2k}(x\alpha^{2^{e}} x^{2^{i+e}})\\
=&Tr_{1}^{2k}(x\mathcal{L}(x)),
  \end{array}
  \end{equation}
where
\begin{displaymath}\begin{array}{rl}
\mathcal{L}(x)=&\alpha x^{2^i}+\alpha^{2^{2k-e}} x^{2^{i+2k-e}}+\alpha^{2^{2k-2e}} x^{2^{i+2k-2e}}
+\cdots +\alpha^{2^{e}} x^{2^{i+e}}\\
=&\alpha x^{2^i}+(\alpha x^{2^i})^{2^{2k-e}}+(\alpha x^{2^i})^{2^{2k-2e}}+\cdots+(\alpha x^{2^i})^{2^{e}}\\
=&Tr^{2k}_e(\alpha x^{2^i}).

\end{array}
\end{displaymath}
According to Lemma \ref{lemma L}, the adjoint operator $ \mathcal{L}^*(x)$ is
\begin{equation}\label{equa adjoint71}
\begin{array}{rl}
 \mathcal{L}^*(x)=&\alpha^{2^{2k-i}} x^{2^{2k-i}}+\alpha^{2^{2k-i}} x^{2^{e-i}}+\alpha^{2^{2k-i}} x^{2^{2e-i}}
  +\cdots+ \alpha^{2^{2k-i}} x^{2^{2k-e-i}}\\
  = &\alpha^{2^{2k-i}} \left(x^{2^{2k-i}}+ x^{2^{e-i}}+ x^{2^{2e-i}}
   +\cdots+  x^{2^{2k-e-i}}\right)\\
   = &\alpha^{2^{2k-i}} \left(x+ x^{2^{e}}+ x^{2^{2e}}
   +\cdots+  x^{2^{2k-e}}\right)^{2^{2k-i}}\\
   \\
=&\alpha^{2^{2k-i}}\left(Tr^{2k}_e( x)\right)^{2^{2k-i}}.
  \end{array}
  \end{equation}
Thus, we have
\begin{displaymath}
\begin{array}{rl}
   \mathcal{L}(x)+\mathcal{L}^*(x)
  =&Tr^{2k}_e(\alpha x^{2^i})+\alpha^{2^{2k-i}}\left(Tr^{2k}_e( x)\right)^{2^{2k-i}}.
  \end{array}
  \end{displaymath}
  From Lemma \ref{lemma adjoint}, it is sufficient to
show that $\mathcal{L}(x)+\mathcal{L}^*(x)$ is invertible. That is,  we need to show that
$\mathcal{L}(x)+\mathcal{L}^*(x)=0$ if and only if $x=0$.

For $\frac{k}{e}$ being even,   we have $ Tr^{2k}_e( x)=0$ if and only if $x\in E$.
If $\alpha\notin E$, then $\mathcal{L}(x)+\mathcal{L}^*(x)=0$ is only if
  \begin{equation}\label{equ bent72}
  \left\{\begin{array}{c}
Tr^{2k}_e(\alpha x^{2^i})=Tr^{k}_e\left(Tr^{2k}_k(\alpha x^{2^i})\right)=0,\\
 Tr^{2k}_e( x)=0,
  \end{array} \right.
  \end{equation}
  i.e., $x=0$.  If for any $x\neq 0$,  we have $\mathcal{L}(x)+\mathcal{L}^*(x)\neq0 $, then $\alpha\notin E$. In fact, if suppose $\alpha\in E $, then
  \begin{displaymath}
  \begin{array}{rl}
   \mathcal{L}(x)+\mathcal{L}^*(x)
  =&Tr^{2k}_e(\alpha x^{2^i})+\alpha^{2^{2k-i}}\left(Tr^{2k}_e( x)\right)^{2^{2k-i}}\\
  =&
  Tr^{2k}_k(\alpha )Tr^{k}_e\left(Tr^{2k}_k( x^{2^i})\right)+\alpha^{2^{2k-i}}\left(Tr^{2k}_e( x)\right)^{2^{2k-i}}\\
  =&0
   \end{array}
   \end{displaymath}
   for any $x\in E$. Hence, $  \mathrm{F}_\alpha^{i}(x)$
  is bent if and only if $\alpha\notin E$.

  Similarly, for $\frac{k}{e}$  odd,  we have $ Tr^{2k}_e( x)=0$ if and only if $x\in O$.  We can prove  $  \mathrm{F}_\alpha^{i}(x)$
  is bent if and only if $\alpha\notin O$. 

   Similarly, for $k$  odd and $e=2$,  we have $ Tr^{2k}_2( x)=0$ if and only if $x\in O$.  We can prove  $  \mathrm{F}_\alpha^{i}(x)$
  is bent if and only if $\alpha\notin O$.
  \qed
\end{proof}

\begin{remark}
Note that Theorem \ref{theo in POtt} is special case of Theorem \ref{theo genealized}. It corresponds to the case where $e=k$.
\end{remark}
Similary to Theorem \ref{theo bentkla}, we have the following statement.
\begin{theorem}\label{theo bentkla generalized}
Let 
$i,\rho$ be two nonnegative
integers such that $\rho\leq k$. Let $\gamma^{(j)}\in {\Bbb F}_{2^{k}}$ and $0\leq t_j\leq k$ be a nonnegative
integer, where $j=1,2,\cdots,\rho$. Let  both $ \sum\limits_{j=1}^{\rho}(\gamma^{(j)})^{2^{k-t_j}} z^{2^{k-t_j}-1}+1=0$ and $ \sum\limits_{j=1}^{\rho}(\gamma^{(j)})^{2^{k-i}} z^{2^{t_j}-1}+1=0$ have no solution in ${\Bbb F}_{2^{k}}$. Let $E$ and
 $O$ be defined as Theorem \ref{theo genealized}.  Let
 the function $\mathrm{F}_\alpha^{i} $ be defined by
  \begin{equation}\label{equa bent 80}
  \mathrm{F}_\alpha^{i}(x)=Tr_{1}^{2k}\left(\alpha x^{2^i}\left(Tr^{2k}_e(x)
  +\sum\limits_{j=1}^{\rho}\gamma^{(j)}(Tr^{2k}_e(x))^{2^{t_j}}\right)\right).
  \end{equation}
  If $\frac{k}{e}$ is even, then $\mathrm{F}_\alpha^{i} $
  is bent if and only if $\alpha\notin E$.
    If $\frac{k}{e}$ is odd, then $\mathrm{F}_\alpha^{i} $
  is bent if and only if $\alpha\notin O$.
   Further,
  if $k$ is odd and $e=2$, then $\mathrm{F}_\alpha^{i} $
  is bent if and only if $\alpha\notin O$.
\end{theorem}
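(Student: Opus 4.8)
The plan is to reproduce the adjoint-operator template of Theorems \ref{theorem 3} and \ref{theo genealized}, since the present function differs from the one in Theorem \ref{theo genealized} only by the twisted terms $\sum_{j}\gamma^{(j)}(Tr^{2k}_e(x))^{2^{t_j}}$, and these are exactly the terms already treated, for $e=k$, in Theorem \ref{theorem 3}. First I would write $\mathrm{F}_\alpha^{i}(x)=Tr^{2k}_1(x\,\mathcal{L}(x))$. Expanding $Tr^{2k}_e(x)=\sum_{l=0}^{2k/e-1}x^{2^{le}}$ and transporting each monomial onto the left factor by the Frobenius-invariance of the absolute trace, as in (\ref{equa bent71}) and (\ref{equa bent31}), yields a linear map of the form
\begin{displaymath}
\mathcal{L}(x)=Tr^{2k}_e(\alpha x^{2^i})+\sum_{j=1}^{\rho}\big(Tr^{2k}_e(\gamma^{(j)}\alpha x^{2^i})\big)^{2^{2k-t_j}},
\end{displaymath}
whose adjoint, computed term by term from Lemma \ref{lemma L}, has the shape of (\ref{equa adjoint31}):
\begin{displaymath}
\mathcal{L}^*(x)=\alpha^{2^{2k-i}}\Big((Tr^{2k}_e(x))^{2^{2k-i}}+\sum_{j=1}^{\rho}(\gamma^{(j)})^{2^{k-i}}(Tr^{2k}_e(x))^{2^{t_j+2k-i}}\Big).
\end{displaymath}
A short Frobenius-orbit check then shows that $\mathcal{L}(x)$ and the bracketed factor of $\mathcal{L}^*(x)$ both take values in $\mathbb{F}_{2^e}$, exactly as in the two model proofs.

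By Lemma \ref{lemma adjoint}, $\mathrm{F}_\alpha^{i}$ is bent iff $\mathcal{L}+\mathcal{L}^*$ is invertible, i.e. iff $\mathcal{L}(x)+\mathcal{L}^*(x)=0$ forces $x=0$. Writing $w=Tr^{2k}_e(x)$ and factoring out $w^{2^{2k-i}}$, the bracketed factor of $\mathcal{L}^*(x)$ becomes $w^{2^{2k-i}}\big(1+\sum_j(\gamma^{(j)})^{2^{k-i}}z^{2^{t_j}-1}\big)$ after the substitution $z=w^{2^{2k-i}}\in\mathbb{F}_{2^e}$; the second hypothesis is precisely that $\sum_j(\gamma^{(j)})^{2^{k-i}}z^{2^{t_j}-1}+1=0$ has no root in $\mathbb{F}_{2^k}$, so $\mathcal{L}^*(x)=0\iff Tr^{2k}_e(x)=0$. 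Since $\mathcal{L}(x)\in\mathbb{F}_{2^e}\subseteq\mathbb{F}_{2^k}$ whereas $\mathcal{L}^*(x)=\alpha^{2^{2k-i}}B(x)$ with $B(x)\in\mathbb{F}_{2^k}$, the hypothesis $\alpha\notin E$ (resp.\ $\alpha\notin O$) rules out a nonzero cancellation: if $\mathcal{L}(x)+\mathcal{L}^*(x)=0$ with $0\neq B(x)$, then $\alpha^{2^{2k-i}}=\mathcal{L}(x)/B(x)\in\mathbb{F}_{2^k}$, whence $\alpha\in\mathbb{F}_{2^k}$, $Tr^{2k}_k(\alpha)=0\in\mathbb{F}_{2^e}$ and $\alpha\in E$, a contradiction. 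Thus $\mathcal{L}^*(x)=0$ and $\mathcal{L}(x)=0$ simultaneously, and, after the first hypothesis is used to reduce $\mathcal{L}(x)=0$ to $Tr^{2k}_e(\alpha x^{2^i})=0$, we are left with exactly the system $Tr^{2k}_e(\alpha x^{2^i})=0,\ Tr^{2k}_e(x)=0$ of (\ref{equ bent72}). Invoking the dichotomy established in the proof of Theorem \ref{theo genealized} — governed by the parity of $k/e$, together with the special case $k$ odd, $e=2$ — then gives $x=0$ precisely when $\alpha\notin E$, resp.\ $\alpha\notin O$. The converse (that $\alpha\in E$, resp.\ $O$, makes $\mathcal{L}+\mathcal{L}^*$ singular) is copied verbatim from Theorem \ref{theo genealized}, since the twisted terms again vanish identically on the offending subspace.

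The step I expect to be the main obstacle is precisely the reduction of $\mathcal{L}(x)=0$ to $Tr^{2k}_e(\alpha x^{2^i})=0$ via the first hypothesis. In the case $e=k$ of Theorem \ref{theorem 3} this is transparent, because $Tr^{2k}_k(\cdot)=(\cdot)+(\cdot)^{2^k}$ has only two summands and fixes $\gamma^{(j)}\in\mathbb{F}_{2^k}$, so that $Tr^{2k}_k(\gamma^{(j)}\alpha x^{2^i})=\gamma^{(j)}Tr^{2k}_k(\alpha x^{2^i})$ and $\mathcal{L}$ factors cleanly as $Tr^{2k}_k(\alpha x^{2^i})\big(1+\sum_j(\gamma^{(j)})^{2^{k-t_j}}(Tr^{2k}_k(\alpha x^{2^i}))^{2^{k-t_j}-1}\big)$, after which the first hypothesis kills the second factor. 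For $e<k$ the element $\gamma^{(j)}\in\mathbb{F}_{2^k}$ is no longer fixed by $x\mapsto x^{2^e}$, so it stays trapped inside the individual twisted traces $Tr^{2k}_e(\gamma^{(j)}\alpha x^{2^i})$ and the same factorisation is unavailable; isolating the right substitution so that the first displayed polynomial really is the obstruction to extra zeros of $\mathcal{L}$ is where the genuine work lies. Everything else is a routine transcription of Theorems \ref{theorem 3} and \ref{theo genealized} with $k$ replaced throughout by $e$.
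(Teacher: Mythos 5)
Your computations of $\mathcal{L}$ and $\mathcal{L}^*$ are correct, and your template is the only natural one here: the paper in fact gives no proof of this theorem at all (it is introduced with the single remark ``Similarly to Theorem~\ref{theo bentkla}''), so the intended argument is precisely the transcription of Theorems~\ref{theorem 3} and~\ref{theo genealized} that you outline. The parts you actually carry out --- the identification $\mathcal{L}^*(x)=\alpha^{2^{2k-i}}B(x)$, the use of the second hypothesis to get $B(x)=0\iff Tr^{2k}_e(x)=0$, and the field-membership argument forcing $\mathcal{L}(x)=\mathcal{L}^*(x)=0$ separately when $\alpha\notin\mathbb{F}_{2^k}$ --- are sound.

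However, the step you flag as ``where the genuine work lies'' is not a postponed technicality; it is a real gap, and the obstruction you identify is in fact fatal to the statement as written. Since $\gamma^{(j)}$ is only assumed to lie in $\mathbb{F}_{2^k}$, not in $\mathbb{F}_{2^e}$, it cannot be pulled through $Tr^{2k}_e$, so $\mathcal{L}(x)=Tr^{2k}_e(\alpha x^{2^i})+\sum_j\bigl(Tr^{2k}_e(\gamma^{(j)}\alpha x^{2^i})\bigr)^{2^{2k-t_j}}$ does not factor through $v=Tr^{2k}_e(\alpha x^{2^i})$, and the first hypothesis (a condition on a single variable $z$ with coefficients $(\gamma^{(j)})^{2^{k-t_j}}$) says nothing about the kernel of this map. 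A degenerate instance already shows the theorem needs an extra hypothesis: take $\rho=1$, $t_1=0$ and $\gamma^{(1)}\in\mathbb{F}_{2^k}\setminus(\mathbb{F}_{2^e}\cup\{1\})$ (possible whenever $e<k$). Both displayed hypotheses then reduce to $\gamma^{(1)}\neq 1$, while the function collapses to $Tr^{2k}_1\bigl((1+\gamma^{(1)})\alpha x^{2^i}Tr^{2k}_e(x)\bigr)$, which by Theorem~\ref{theo genealized} is bent if and only if $(1+\gamma^{(1)})\alpha\notin E$; this is not equivalent to $\alpha\notin E$, because $E=\{x: Tr^{2k}_k(x)\in\mathbb{F}_{2^e}\}$ is not stable under multiplication by an element of $\mathbb{F}_{2^k}\setminus\mathbb{F}_{2^e}$ (any $\alpha$ with $Tr^{2k}_k(\alpha)\in\mathbb{F}_{2^e}^*$ lies in $E$ while $(1+\gamma^{(1)})\alpha$ does not). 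The statement, and with it your proof, is repaired by strengthening the hypothesis to $\gamma^{(j)}\in\mathbb{F}_{2^e}$: then $(\gamma^{(j)})^{2^{k-t_j}}$ commutes with $Tr^{2k}_e$, one gets $\mathcal{L}(x)=v+\sum_j(\gamma^{(j)})^{2^{k-t_j}}v^{2^{k-t_j}}$ with $v=Tr^{2k}_e(\alpha x^{2^i})\in\mathbb{F}_{2^e}$, the first hypothesis kills the extra factor exactly as in Theorem~\ref{theorem 3}, and the remainder of your argument reduces everything to the system of Theorem~\ref{theo genealized} and closes the proof.
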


\section{Conclusions}
This paper is in the line of a very recent paper published in the IEEE-transactions Information Theory by Pott et al  \cite{Pott2017} in which several open problems have been raised.
In the present paper, we have established that the property of a function having the maximal number of bent components is invariant under CCZ-equivalence which gives an answer to an open problem in \cite{Pott2017}. Next, we have proved the non-existence of APN plateaued functions having  the maximal number of bent components which gives a partial answer to an open problem in \cite{Pott2017}.
Furthermore, we have exhibited several bent functions $F_{\alpha}^i$  for any $\alpha\in\mathbb{F}_{2^{2k}}\setminus\mathbb{F}_{2^k}$ provided that some conditions hold. In other words, the set of those $\alpha$ for which $F_{\alpha}^i$ is bent is of maximal cardinality $2^{2k}-2^{k}$. This provide an answer to another open problem in \cite{Pott2017}. In addition, we have studied the differential spectrum of certain functions and showed that it is not equal to those studied in \cite{Pott2017}.

\end{document}